\documentclass[twocolumn, aps, prl, showpacs, superscriptaddress, longbibliography, nobibnotes]{revtex4-2} % {{{*
\usepackage[utf8]{inputenc}
\usepackage[english]{babel}
\usepackage{amsmath}
\usepackage{amsfonts}
\usepackage{amssymb}
\usepackage{graphicx}
\graphicspath{{img/}{Informe_de_EPS___Mariana/}}
\usepackage{physics}
\usepackage{bbm}
\usepackage[dvipsnames]{xcolor}
\usepackage{booktabs}
\usepackage{float}
\usepackage[normalem]{ulem}
\usepackage{dsfont}

\usepackage{tikz}
\usetikzlibrary{calc,decorations.pathreplacing,positioning}

\usepackage{hyperref}
\hypersetup{
colorlinks=true,
linkcolor=blue,
filecolor=blue,
citecolor=blue,
urlcolor=blue
}
\usepackage[mathlines]{lineno}
\usepackage[draft,inline,nomargin]{fixme} \fxsetup{theme=color}
\definecolor{fxtarget}{HTML}{C8003C}
\definecolor{jacolor}{HTML}{C82800}
\FXRegisterAuthor{ja}{aja}{\color{jacolor}JA}
\FXRegisterAuthor{cp}{acp}{\color{blue}CP}
\definecolor{mpcolor}{RGB}{0,140,140}
\FXRegisterAuthor{mp}{amp}{\color{mpcolor}Mariana}
\FXRegisterAuthor{jn}{ajn}{\color{purple}JN}
\FXRegisterAuthor{cl}{acl}{\color{ForestGreen}CL}
\definecolor{codexcolor}{HTML}{7A1FA2}
\FXRegisterAuthor{codex}{acodex}{\color{codexcolor}Codex}

\NewDocumentCommand{\flexfxnote}{m m m m}{%
  \IfNoValueTF{#3}{#1{#2}}{%
    \IfNoValueTF{#4}{%
      #1{\sout{#2} #3}%
    }{%
      #1{\sout{#2} #3 {\small \tt #4}}%
    }%
  }%
}
\NewCommandCopy{\fxnoteoriginal}{\fxnote}
\NewCommandCopy{\janoteoriginal}{\janote}
\NewCommandCopy{\cpnoteoriginal}{\cpnote}
\NewCommandCopy{\mpnoteoriginal}{\mpnote}
\NewCommandCopy{\jnnoteoriginal}{\jnnote}
\NewCommandCopy{\clnoteoriginal}{\clnote}
\NewCommandCopy{\codexnoteoriginal}{\codexnote}
\RenewDocumentCommand{\fxnote}{m g g}{%
  \flexfxnote{\fxnoteoriginal}{#1}{#2}{#3}}
\RenewDocumentCommand{\janote}{m g g}{%
  \flexfxnote{\janoteoriginal}{#1}{#2}{#3}}
\RenewDocumentCommand{\cpnote}{m g g}{%
  \flexfxnote{\cpnoteoriginal}{#1}{#2}{#3}}
\RenewDocumentCommand{\mpnote}{m g g}{%
  \flexfxnote{\mpnoteoriginal}{#1}{#2}{#3}}
\RenewDocumentCommand{\jnnote}{m g g}{%
  \flexfxnote{\jnnoteoriginal}{#1}{#2}{#3}}
\RenewDocumentCommand{\clnote}{m g g}{%
  \flexfxnote{\clnoteoriginal}{#1}{#2}{#3}}
\RenewDocumentCommand{\codexnote}{m g g}{%
  \flexfxnote{\codexnoteoriginal}{#1}{#2}{#3}}

\newcommand{\mysection}[1]{\par{{\textit{#1---}}}\hspace{-2pt}}

\newcommand{\refsm}[1]{Sec.~\ref{#1} of Supplemental Material~\cite{SM}}

\newcommand{\Eref}[1]{Eq.~(\ref{#1})}

\newcommand{\Fref}[1]{Fig.~\ref{#1}}
\newcommand{\Tref}[1]{Table~\ref{#1}}

\newcommand{\mbZ}{\mathbb{Z}}
\newcommand{\mbC}{\mathbb{C}}
\newcommand{\mcH}{\mathcal{H}}

\newcommand{\mcT}{\mathcal{T}}
\newcommand{\vmcT}{\vec{\mcT}}

\newcommand{\one}{\mathbbm{1}}

\newtheorem{theorem}{Theorem}
\newtheorem{proof}{Proof}

\newcommand{\unam}{Universidad Nacional Aut\'onoma de M\'exico, Ciudad de M\'exico 04510, Mexico}
\newcommand{\ifunam}{Instituto de F\'{\i}sica, \unam}
\newcommand{\atominstitut}{Vienna Center for Quantum Science and Technology, Atominstitut, TU Wien, 1020 Vienna, Austria}
\begin{document}
% Title, authors, abstract {{{*
\newcommand{\titlepaper}{
  The Geometry of Transport in Quantum Walks and Parrondo's Paradox
}
\title{\titlepaper}

\author{Jose Alfredo de Leon}\email{deleongarrido.jose@gmail.com}\affiliation{\ifunam}

\author{Mariana Pérez-Muralles}
\affiliation{Escuela de Ciencias Físicas y Matemáticas, Universidad de San Carlos de Guatemala, Ciudad Universitaria, Guatemala 01012, Guatemala}

\author{Jan Neuser}\affiliation{\atominstitut}

\author{Carlos Pineda}\affiliation{\ifunam}\affiliation{\atominstitut}

\begin{abstract}
We study Parrondo's paradox---the phenomenon where combining losing strategies
yields a winning one---in a minimal discrete-time quantum walk. 
We introduce the transport vector, encoded in the coin's
steady state, whose inner product with the initial coin state gives the
walker's asymptotic velocity.
More generally, the paradox emerges exactly when
the transport vector of the combined strategy falls outside the cone spanned by
the individual ones---a geometric criterion valid for any combination of
strategies. It
explains, for instance, why composing two coin operators within a single step
can produce the paradox while simple alternation between them cannot, since
alternation keeps the combined vector confined to the cone. The paradoxical set
has nonzero measure, and we compute its probability explicitly in
representative cases. This casts the paradox as one instance of designing
reachable transport in quantum walks.
\end{abstract}
%*}}}
\maketitle
\mysection{Introduction} % {{{*
Parrondo's paradox occurs when two processes with the same outcome produce
the opposite outcome when combined~\cite{Parrondo1996,harmer_1999_losing,
wolf_2005_diversity}. 
% If needed, we can delete this sentence:
The paradox has been reported across a wide range of 
disciplines~\cite{abbott_2010_asymmetry,wen_2024_parrondo_review,cheong_2019_paradoxical,cheong_2020_relieving,cheong_2022_lysis,gokhale_2023_crop,spurgin_2005_switching,lai_2020_social_review,mishra_2024_routing,pires_2026_chaos}.
In transport, it manifests when individually biased dynamics are combined
and the resulting motion reverses direction~\cite{amengual_2004_discrete}. Quantum walks provide a natural 
setting for such reversals, since coherent dynamics can interfere under 
composition~\cite{flitney_2012_quantum,jan_2020_experimental}.
This raises two central questions in the discrete-time setting:
can the asymptotic transport of a quantum walk be characterized 
in terms of a simple quantity,
and can such a description determine the general conditions under which
Parrondo's paradox can emerge, as well as its minimal realization? 

Discrete-time quantum walks (DTQWs) provide a simple setting for addressing 
these questions. They describe a lattice walker controlled by an internal
degree of freedom, the \emph{coin}.
In the standard one-dimensional 
model, a two-level coin operator $C$ followed by a conditional shift $S$ defines 
the unitary step $U=SC$~\cite{Aharonov1993,Kempe2003}. 
Different choices of $C$ thus define different walk operators $U_i$, which can
drive the walker to the left or right. Parrondo's paradox occurs
when two walks that individually favor the same direction produce transport in the opposite direction when coherently combined.
The paradox
has been explored in many quantum-walk architectures and
protocols~\cite{flitney2004,kosik_2007_quantum,bulger_2008_positiondependent,chandrashekar_2011_parrondos,flitney_2012_quantum,li_2013_quantum,pawela_2013_cooperative,rajendran_2018_implementing,rajendran_2018_playing,machida_2018_limit,lai_2020_parrondo,lai2020parrondo,jan_2020_experimental,pires_2020_parrondos,walczak_2021_parrondos,walczak_2022_parrondos,trautmann_2022_momentum,jan_2023_territories,mielke_2023_lowdimensional,walczak_2023_noiseinduced,ximenes2024parrondo,kadiri_2024_scouring,mittal_2024_parrondos,walczak2024parrondo,walczak_2025_parrondos,chen_2025_singlequbit,ximenes_2026_aperiodic,cordeiro_2026_defective}. 
% \jnnote{I also believe this should be longer and more explanotory}\janote{For now, I'd like to keep it packaged until we solve the space problem}. 
In fact, Ref.~\cite{flitney_2012_quantum} 
concluded that suitable sequences of two walk operators
can produce such a reversal, although only transiently. 
Persistent reversals were later obtained using larger coin spaces, 
multiple-coin states, or explicit time 
dependence~\cite{rajendran_2018_playing,rajendran_2018_implementing,pires_2020_parrondos}.
Ref.~\cite{jan_2020_experimental} then experimentally realized the paradox in the same 
DTQW setting via period-three step alternation, one of the two 
combination schemes considered here; the reversal was observed over ten 
steps, with simulations supporting its long-time 
persistence. Subsequent numerics found extended 
parameter regions with persistent Parrondo behavior for periods three 
and higher~\cite{jan_2023_territories}. What remained missing was 
a criterion identifying which coherent combinations can reverse the 
asymptotic current and the minimal structure required to do so.
% Parrondo-based quantum-walk protocols have also been explored for entanglement generation, chaos control, quantum search, and cryptography~\cite{panda_2021_order,lai_2021_chaotic,panda2022generating,hosaka_2024_parrondos,rath_2025_chaos,pawela_2025_image,giri_2026_exceptional,rath_2026_crypto}. These results showed that simple DTQWs can sustain Parrondo behavior, but provided no criterion for which coherent combinations can or cannot reverse the asymptotic current, or for the minimal combination.

At the heart of this problem lies a simpler question: what determines the
direction of asymptotic transport? We answer it by introducing the
\emph{transport vector} $\vmcT(U)$, which isolates the 
full step operator $U=SC$
contribution to directed motion.
For the class of one-dimensional two-state DTQWs considered here,
we build on known results for the ballistic 
spreading~\cite{ambainis_2001_onedimensional,konno_2002_quantum,grimmett_2004_weak,konno_2005_limit,ahlbrecht_2011_asymptotic}
and show that the asymptotic velocity of the walker factorizes as
\begin{equation}
\overline{v}=\vmcT(U)\cdot\vec r_0,
\end{equation}
where $\vec r_0$ is the Bloch vector of the initial coin state and 
$\vmcT(U)$ depends only on $U$. 
Thus, $\vmcT(U)$ determines how the initial coin state sets the direction of
asymptotic transport.
We also show that the $z$ component of this stationary coin state, studied in 
Ref.~\cite{annabestani_2020_asymptotic}, is exactly the asymptotic velocity 
$\overline{v}$, giving an alternative, intuitive coin-space picture of the walker's long-time motion.

With the transport vector, Parrondo's paradox becomes a geometric problem of
relating the transport vectors of the individual walks to that of their
coherent combination. We derive a general criterion for when the combined walk
can reverse the common drift of the individual ones, and apply it to two
protocols: \emph{coin composition}, e.g., $U=SC_2C_1$, where coin operators are
combined within a single step, and \emph{step alternation}, e.g., $U_2U_1$,
where complete steps are applied sequentially. The contrast is sharp: coin composition already produces the paradox
with two coin operators and no temporal modulation, whereas two-step alternation can
never do so and a three-step sequence is minimal. Moreover, the effect is not
rare: when the two coin operators and the initial coin state are sampled independently
and uniformly, one in six configurations is paradoxical. Thus, the
transport-vector picture reveals not only when the paradox is possible, but
also its minimal realization and how likely it is to arise, opening a route to
the systematic study of directed transport in quantum walks. 
%*}}}
\mysection{Transport vector in a DTQW} % {{{*
The 1D DTQW describes a \textit{walker} with an internal \textit{coin} degree of freedom, say its spin, moving on a one-dimensional infinite lattice in discrete-time steps~\cite{portugal_2018_quantum, nayak_2000_quantum, ambainis_2003_quantum}.
The system evolves on $\mcH_p \otimes \mcH_c$, where $\mcH_p = \text{span}\{\ket{x}\}_{x \in \mathbb{Z}}$ ($\mathbb{Z}$ denoting the integers) is the position space and $\mcH_c = \mathbb{C}^2$ is the coin space. 
At discrete time $t$, $\ket{\psi_t}=U^t\ket{\psi_0}$, with single-step evolution operator $U=SC$, conditional shift $S = \sum_{x \in \mathbb{Z}} \qty(\ket{x+1}\bra{x} \otimes \dyad{0} + \ket{x-1}\bra{x} \otimes \dyad{1})$, and arbitrary unitary operator $C$ acting on $\mcH_c$.
The $\dyad{0}$ term of $S$ produces a positive lattice displacement, whereas the $\dyad{1}$ term produces a negative one.
In Fourier space, $U = \int \frac{dk}{2\pi} \dyad{k} \otimes \tilde U(k)$, with $\tilde U(k) = e^{-ik\sigma_z} C$~\cite{portugal_2018_quantum}.
For localized initial states, translational invariance lets us set
$\ket{\psi_0} = \ket{x=0} \otimes \ket{c_0}$ without loss of
generality~\cite{portugal_2018_quantum}, where $\ket{c_0}$, the initial state
of the coin, plays a role determining the asymptotic drift.
This is the minimal quantum-walk setting: a two-dimensional coin on a one-dimensional lattice, a fixed coin operator, and an initially separable state.

Unlike a random walker, a quantum walker spreads ballistically.
In Fourier space this follows because $\langle x\rangle(t)$ contains a term linear in time plus subleading terms that do not contribute to the asymptotic velocity; see \refsm{sec:asymptotic_velocity_derivation};
thus, $\langle
x\rangle(t\to\infty)=\overline{v}t$~\cite{portugal_2018_quantum}.
% , where we show that $\overline v$ manifests in both walker and coin subspaces.
The asymptotic velocity factorizes as $\overline{v}=\vmcT\cdot\vec r_0$, where $\vec r_0$ is the Bloch vector of $\ket{c_0}$ and the \textit{transport vector} $\vmcT$ encapsulates the operator's sole contribution to the asymptotic drift; see \refsm{sec:asymptotic_velocity_derivation}. Moreover, we prove that $\overline v$ equals the $z$ component of the coin's steady-state Bloch vector; see \refsm{sec:reduced_coin}.

A key result of this Letter is that the walker's asymptotic drift is encoded in a single initial-state-independent geometric quantity, the transport vector $\vmcT$.
A transport vector can always be defined for translationally invariant DTQWs whose $\tilde U(k)$ has a $k$-independent determinant, including the present setup, alternating coin-operator sequences, and sequences containing higher-order momentum frequencies such as $e^{\pm 2ik}$.
In these cases, the Fourier-space unitary step differs from an $\mathrm{SU}(2)$ matrix only by a $k$-independent global phase, which does not contribute to $\vmcT$; see \refsm{sec:asymptotic_velocity_derivation}. 
Henceforth, $\tilde U(k)$ denotes the corresponding $\mathrm{SU}(2)$ representative of the original Fourier-space unitary step with this phase removed. 
Writing $\tilde U(k)=e^{-i\omega(k)\hat n(k)\cdot\vec\sigma}$, with quasienergy $\omega(k)$ and corresponding Bloch axis $\hat n(k)$, \refsm{sec:asymptotic_velocity_derivation} gives $\vmcT=\int_{-\pi}^{\pi}\frac{\dd k}{2\pi}(\partial_k\omega(k))\hat n(k)$, which can be written directly in terms of $\tilde U(k)$ as 
\begin{equation}\protect\label{eq:transport_vector_Uk}
  \vmcT= i \int_{-\pi}^{\pi}\frac{\dd k}{2\pi}
  \frac{\partial_k\qty{\Tr[\tilde U(k)]}
  \Tr[\vec\sigma\, \tilde U(k)]}
  {\bqty{\Tr[\tilde U(k)]}^2-4}.
\end{equation}
where $\vec{\sigma} = (\sigma_x, \sigma_y, \sigma_z)$ is the vector of
Pauli matrices.
For $\tilde U(k)=e^{-ik\sigma_z}C$, the integral evaluates analytically to the closed-form expression 
\begin{equation}\label{eq:transport_vector_integrated}
  \vmcT = \frac{1}{1+\abs{C_{01}} }
  \begin{pmatrix}
    \Re(C_{01}\, C_{00}^*) \\
    -\Im(C_{01}\, C_{00}^*) \\
    \abs{C_{00}}^2
  \end{pmatrix},
\end{equation}
where $C_{ij}$ denote the matrix elements of $C$;
see alternative derivations in \refsm{sm:transport_vector_integral} 
and \refsm{sm:residue_method}. This reveals three
fundamental properties of the ballistic transport.
First, $\overline v=0$ whenever $C_{00}=0$, proving that coin operators flipping the computational basis yield zero net drift for any initial state.
Second, $\vmcT$ is invariant under the transformation $C\to e^{i\phi}C$;
% this allows us to restrict to $\mathrm{SU}(2)$ coins, .
thus, a U(2) matrix and its SU(2) representative have identical transport vectors.
Third, $\vmcT$ is invariant under $C\to e^{-i\gamma\sigma_z}C$ because $(C_{00},C_{01})\to(C_{00}e^{-i\gamma},C_{01}e^{-i\gamma})$ leaves Eq.~\eqref{eq:transport_vector_integrated} unchanged. Thus, appending a final $z$ rotation to the coin operator does not affect the asymptotic velocity; e.g., $H=ie^{-i\pi/2\,\sigma_z}e^{i\pi/4\,\sigma_y}$ has the same $\vmcT$ as $C=e^{i\pi/4\,\sigma_y}$.

The coin's asymptotic steady state gives a complementary description of the walker's drift.
Tracing out the lattice and taking the long-time limit, the coin relaxes to the stable stationary state
\begin{equation}\label{eq:steady_state}
\rho_C(t\to\infty) = \frac{1}{2} \bqty{ \one + \pqty{ \mathcal{M} \vec{r}_0 } \cdot \vec{\sigma} },
\end{equation}
governed entirely by the symmetric matrix $\mathcal{M} = \int_{-\pi}^{\pi} \frac{\dd{k}}{2\pi} \hat{n}(k) \hat{n}^T(k)$; see \refsm{sec:reduced_coin}.
Since $\partial_k\omega(k)=n_z(k)$, the transport vector is exactly the final column of $\mathcal M$, $\vmcT=\mathcal M\hat z$.
Consequently, the asymptotic velocity is embedded in the coin's relaxation dynamics as the $z$ component of its surviving Bloch vector, $\overline v=r_z(t\to\infty)$. 
Thus, directed spatial transport and internal relaxation are, quite literally, two sides of the same coin.

%*}}}
\mysection{Winning and losing strategies in DTQWs} % {{{*
\begin{figure} % {{{* Figure with the two hemispheres, and the trayectories
  \centering
  \includegraphics[width=\columnwidth]{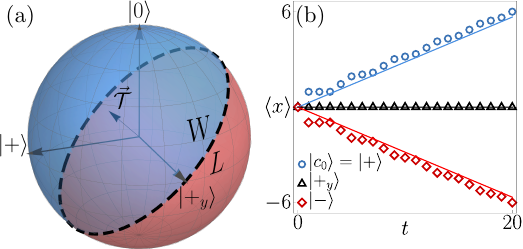}
  \caption{
  (a) Bloch sphere for a Hadamard DTQW, divided by great circle (dashed) 
   into winning (blue, W) and losing (red, L) hemispheres of
  initial states $\vec{r}_0$, based on the sign of the asymptotic velocity
  $\overline{v} = \vmcT \cdot \vec{r}_0$; the arrow marks the transport vector
  $\vmcT$.
  % JA resuelve:
  % \cpnote{quiza dejar los ejes sin la flecha porque no 
  % se distingue del vector de transporte}\janote{thick arrow}. 
  (b) Mean position $\ev{x}$ versus
  time $t$ for three representative initial states, one selected from the
  winning hemisphere, the losing hemisphere, and the null equator in (a):
  $\ket{+}$ (blue), $\ket{+_y}$ (black), and $\ket{-}$ (red), respectively.
  Solid lines denote the asymptotic prediction $\overline{v}t$.
  % JA resuelve:
  % \cpnote{Aca creo que 
  % debemos decirle al lector donde calculamos estas vainas}\janote{footnote}.
  }
  \label{fig:strategies}
\end{figure}%*}}}
To formulate Parrondo's paradox, we define a \textit{strategy} as the choice of a coin operator $C$, specifying the unitary step operator $U$, and an initial coin state $\ket{c_0}$. Analogously to positive capital growth in classical betting, a winning (losing) strategy is a DTQW with positive (negative) asymptotic velocity $\overline v$.

The strategy outcome is equivalently the sign of either $\vmcT\cdot\vec r_0$ or the $z$ component of the coin's steady state:
\begin{subequations}\label{eq:strategies}
\begin{align}
  \text{winning strategy if:} \quad &\vmcT \cdot \vec{r}_0 = r_z(t\to\infty) > 0, \\
  \text{losing strategy if:} \quad &\vmcT \cdot \vec{r}_0 = r_z(t\to\infty) < 0.
\end{align}
\end{subequations}
Crucially, this reconciles two classifications used in the DTQW literature. Finite-time studies define success by either spatial drift~\cite{flitney_2012_quantum,lai_2020_parrondo,pires_2020_parrondos,walczak_2021_parrondos,walczak_2022_parrondos,mittal_2024_parrondos} or an internal coin bias favoring $\ket0$ over $\ket1$~\cite{chandrashekar_2011_parrondos,rajendran_2018_implementing,rajendran_2018_playing,jan_2020_experimental,jan_2023_territories}. Equations~\eqref{eq:strategies} establish that, in the long-time limit, the signs of the walker's mean position and the coin's internal state probabilities are strictly equivalent.

Geometrically, once $U$ and thus $\vmcT$ are fixed,
Eq.~\eqref{eq:strategies} makes the outcome depend only on the orientation
of $\vec r_0$ relative to $\vmcT$. The plane
$\vmcT\cdot\vec r_0=0$ defines an equator of strategies, bisecting the
initial-state Bloch sphere into winning (``northern'') and losing
(``southern'') hemispheres, shown by the dichromatic coloring for a
Hadamard DTQW in Fig.~\ref{fig:strategies}(a).
Figure~\ref{fig:strategies}(b) shows the asymptotic
$\ev{x}(t)$ for states from each domain. A strategy is null when
$\vec r_0\perp\vmcT$ or $\vmcT=\vec0$.

This geometric classification underlies our analysis of Parrondo's paradox in DTQWs, 
defined here as reversing the walker's asymptotic drift by combining 
strategies with the same individual outcome.
For evolution operators $U_{1,2}=SC_{1,2}$ and initial state $\vec r_0$, each 
strategy is losing (winning) when $\vmcT(U_i)\cdot\vec r_0<0$ 
[$\vmcT(U_i)\cdot\vec r_0>0$]. The paradox arises when this holds for 
both $U_i$, while another evolution operator constructed from them---e.g., 
$SC_2C_1$ or $U_2U_2U_1$---has the opposite transport-vector dot-product 
sign against the same $\vec r_0$.

To investigate these inequalities, we analyze two simple, distinct ways of 
combining strategies commonly used in DTQWs.
In \emph{coin composition}, coin operators are combined before the 
shift~\cite{chandrashekar_2011_parrondos,mittal_2024_parrondos,lai_2020_parrondo}; 
within this case we also single out rotation compositions about a common axis. 
In \emph{step alternation}, complete single-step unitary operators are applied sequentially~\cite{chandrashekar_2011_parrondos,flitney_2012_quantum,rajendran_2018_implementing,rajendran_2018_playing,pires_2020_parrondos,jan_2020_experimental,walczak_2021_parrondos,walczak_2022_parrondos,jan_2023_territories}.
Both composite operators' transport vectors follow directly from Eq.~\eqref{eq:transport_vector_Uk}, whereas Eq.~\eqref{eq:transport_vector_integrated} applies only to coin composition. The paradox is then determined by evaluating these composite vectors against the strategy inequalities.

%*}}}
\mysection{Coin composition} % {{{*
In coin composition, coin operators are applied consecutively before 
each shift, the simplest case being $U=SC_2C_1$. 
This two-coin-operator composition already produces Parrondo's paradox.
Consider $C_i(\hat n,\chi_i)=\exp[-i(\chi_i/2)\hat n\cdot\vec\sigma]$, 
rotations by $\chi_i$ about a common axis $\hat n$, and an equatorial 
initial coin state with Bloch vector $\vec r_0$.
For fixed $\vec r_0$ and $\hat n$, a strategy changes outcome where its 
asymptotic velocity vanishes, i.e., where the transport vector is orthogonal 
to the initial state, $\vmcT(\chi)\cdot\vec r_0=0$. 
Besides the trivial solution $\chi=0$, generic configurations have a 
unique critical angle $\chi_{\mathrm c}\in(0,2\pi)$; 
see \refsm{SM:critical_rot_angle}. 
These two zeros divide $[0,2\pi)$ into winning and losing intervals with 
positive and negative $\overline v$, respectively, as shown in 
\Fref{fig:PhaseDiagrams}.
Since common-axis rotations are additive, $C_2C_1=C(\hat n,\chi_1+\chi_2)$. 
Thus, Parrondo's paradox occurs when $\chi_1$ and $\chi_2$ lie in the 
same interval but $(\chi_1+\chi_2)\bmod 2\pi$ lies in the opposite one, 
moving from a winning interval to a losing one or vice versa.

For an equatorial initial coin state and two coin operators sharing a 
common axis, the inset of \Fref{fig:PhaseDiagrams} gives a 
phase-diagram-like outcome map in the $(\chi_1,\chi_2)$ plane.
Each individual outcome is determined by which side of $\chi_{\mathrm c}$ 
contains $\chi_i$, while the composition is classified by 
$(\chi_1+\chi_2)\bmod2\pi$. These three binary classifications partition 
the plane into regions such as $L\circ L=L$ and $L\circ L=W$.
The inset highlights regions where both individual outcomes are the same, 
distinguishing intuitive $L\circ L=L$ and $W\circ W=W$ from stippled 
paradoxical $L\circ L=W$ and $W\circ W=L$ regions.

\begin{figure} % {{{* Asymptitc velocity vs \chi_c and phase diagram
  \centering
  % TikZ's coordinate calculations use restricted horizontal mode, which
  % conflicts with the line-numbering hooks inserted inside the picture.
  % \nolinenumbers
  \begin{tikzpicture}
    \node[inner sep=0] (img) {%
      \includegraphics[width=\columnwidth]{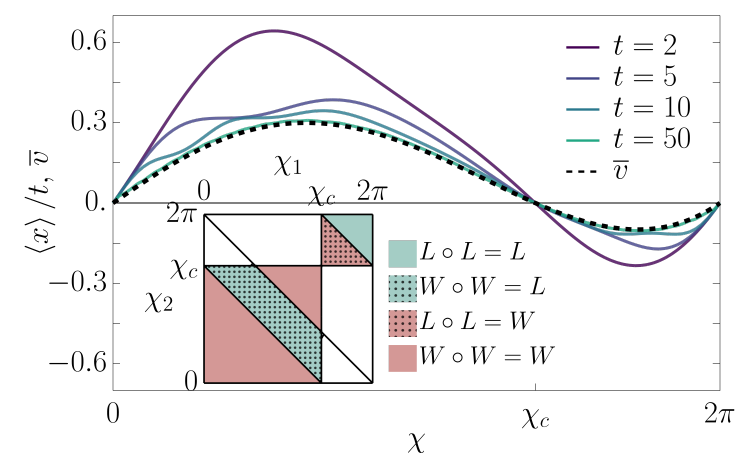}%
    };

    \coordinate (Rbot) at ($(img.south west)!0.75!(img.south east)$);
    \coordinate (Rtop) at ($(img.north west)!0.75!(img.north east)$);

    \coordinate (bracebot) at ($(Rbot)!0.28!(Rtop)$);
    \coordinate (bracetop) at ($(Rbot)!0.43!(Rtop)$);

    \draw[
      decorate,
      decoration={brace,mirror,amplitude=4pt},
      line width=0.5pt
    ]
      (bracebot) -- (bracetop)
      node[
        midway,
        right=3pt,
        font=\fontsize{9pt}{9pt}\selectfont,
        align=center
      ] {Parrondo's\\paradox};
  \end{tikzpicture}
  % \linenumbers
  \caption{DTQWs with initial coin state
  $\ket{c_0} = (\ket{0} + e^{i\pi/4}\ket{1})/\sqrt{2}$ and coin
  operators $C(\hat n,\chi)$ corresponding to Bloch-sphere rotations
  about the Hadamard axis $\hat n = (1,0,1)/\sqrt{2}$.
  Main panel: Mean velocity $\langle x \rangle / t$ as a function of the
  rotation angle $\chi$ for different times $t$ (solid colored
  lines), alongside the analytical asymptotic velocity
  $\overline v(\chi)$ (dashed black line), derived in \refsm{SM:expr_fig3}. The critical angle
  $\chi_\mathrm{c} = 2\left[\pi - \arctan\left(\sqrt{2}\right)\right]$
  separates winning and losing intervals. Inset: outcome map for two coin
  operators $C(\hat n,\chi_i)$ sharing the same axis $\hat n$.
  Highlighted are the intuitive regions
  ($L \circ L = L$ and $W \circ W = W$) and, stippled, the paradoxical
  regions ($L \circ L = W$ and $W \circ W = L$).}
  \label{fig:PhaseDiagrams}
\end{figure}% *}}}

Parrondo's paradox requires no fine tuning in this simple composition 
setup.
For fixed $\vec r_0$ and $\hat n$, with $\chi_1$ and $\chi_2$ drawn 
independently and uniformly from $[0,2\pi)$, a configuration outcome 
such as $L\circ W=W$ has probability equal to its area in the $(\chi_1,\chi_2)$ 
outcome map divided by $4\pi^2$.
Summing the paradoxical regions gives 
$\mathbb{P}(L\circ L=W\mid\vec r_0,\hat n) + 
\mathbb{P}(W\circ W=L\mid\vec r_0,\hat n) =
\chi_{\mathrm c}(2\pi-\chi_{\mathrm c})/4\pi^2$, whereas summing 
the intuitive regions  gives 
$\mathbb{P}(L\circ L=L\mid\vec r_0,\hat n) + 
\mathbb{P}(W\circ W=W\mid\vec r_0,\hat n) =
(4\pi^2-6\pi\chi_{\mathrm c}+3\chi_{\mathrm c}^2)/4\pi^2$. For the 
parameters of \Fref{fig:PhaseDiagrams}, these probabilities are 
approximately $0.212$ and $0.365$, respectively.  
More generally, the paradoxical-to-intuitive ratio remains finite and 
nonzero for $0<\chi_{\mathrm c}<2\pi$, equals unity at $\chi_{\mathrm c}=\pi$, 
and vanishes only as $\chi_{\mathrm c}\to0^+$ or $\chi_{\mathrm c}\to(2\pi)^-$, 
where the paradoxical regions disappear. 
Does Parrondo's paradox remain prevalent without the common-axis and 
equatorial-initial-state restrictions?

It does. To determine how often it occurs without these restrictions, 
we draw $C_1$, $C_2$, and $\vec r_0$ independently according to their 
invariant Haar measures.
The two paradoxical configurations have total probability $1/6$; 
see \refsm{sec:joint_prob_derivation}. The corresponding intuitive
configurations have probability $1/2$, only three times as often.
Overall, nonparadoxical configurations have probability $5/6$ and 
are only five times as likely as paradoxical ones. This is remarkably 
large: even in this minimal setting, the paradox arises naturally and 
frequently rather than requiring carefully engineered DTQWs.

%*}}}
\mysection{Step alternation} % {{{*
Step alternation instead applies complete steps $U_i=SC_i$ sequentially, 
rather than combining coin operators within a single step, as in the 
experimentally realized $ABB$ sequence~\cite{jan_2020_experimental}.
The minimal instance, $U_2U_1$, cannot yield the paradox.
As shown analytically in \refsm{sm:colinearity}, 
$\vmcT(U_2U_1)=\gamma\vmcT(U_1)$ with $\gamma\geq0$. 
Hence, if $U_1$ is losing for $\vec r_0$, $\vmcT(U_1)\cdot\vec r_0<0$, 
then $\vmcT(U_2U_1)\cdot\vec r_0\leq0$; the alternation is losing for 
$\gamma>0$ and null for $\gamma=0$. Thus, period two cannot reverse 
the asymptotic drift for any initial coin state.

The next candidate is therefore a three-step alternation; 
we consider $U_2U_2U_1$, matching the $ABB$ pattern of the experimental
architecture realized.
With a third factor, $\vmcT(U_2U_2U_1)$ is no longer forced to be 
collinear with $\vmcT(U_1)$, as made precise in the next section 
for an arbitrary number of strategies. 
Figure~\ref{fig:parrondo_emergence}(a) shows such a case: the red stippled 
region contains initial coin states for which $U_1$ and $U_2$ are 
losing while $U_2U_2U_1$ is winning. Panel (b) confirms the 
corresponding asymptotic drift reversal for one such state.

\begin{figure} % {{{* Parrondo areas in Bloch sphere for step alternation
  \centering
  \includegraphics[width=\columnwidth]{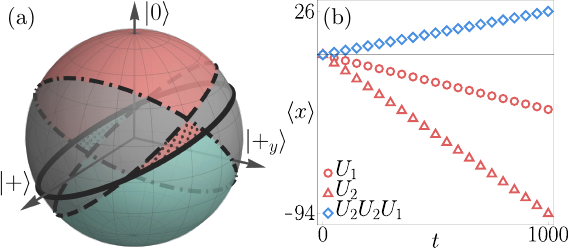}
  % Old caption {{{*
  % \caption{
  % (a) Winning/losing boundaries on the Bloch sphere of initial states $\vec{r}_0$ for the individual strategies $U_1$ and $U_2$ and for the three-step alternation $U_2U_2U_1$, shown as three great circles. The equators, defined by $\vmcT(U_i)\cdot \vec r_0 = 0$, are shown as dashed, dash dotted, and solid black lines for $U_1$, $U_2$ and $U_2U_2U_1$, respectively.
  % \cpnote{Preguntar a Jose Alfredo qué estilo de línea corresponde a cada operador (son arbitrarios en el script de la figura), para citarlo aquí explícitamente}\janote{Lo checo}. 
  % The distinct regions are painted following the same of the inset of \Fref{fig:PhaseDiagrams}, in particular, $L\circ L=W$ is the stippled red region. (b) $\langle x\rangle(t)$ for an initial state selected from the $L\circ L=W$ region (stipped red), showing negative asymptotic drift for $U_1$ and $U_2$ and positive drift for $U_2U_2U_1$.
  % \cpnote{Un poco redundante, ver si deamos el U2U2U1 o el rollo completo con las SCSCSC. Discutir con JA}\janote{cambiar legend para U1U2U2}\mpnote{\tt También pienso que sería bueno cambiar los colores de los círculos mayores del panel de la izquierda por los colores correspondientes en el panel de la derecha. :D}\janote{editar y enviar junto con step-alternation}
  % } *}}}
  \caption{(a) Outcome map on the Bloch sphere of initial states $\vec r_0$
  for $U_1$, $U_2$ randomly chosen, and their three-step alternation $U_2U_2U_1$. The zero-drift 
  great circles $\overline v =\vmcT(U)\cdot\vec r_0=0$ partition the Bloch 
  sphere into winning and losing initial states for $U_1$ (dashed), $U_2$ 
  (dash-dotted), and $U_2U_2U_1$ (solid). Solid and stippled regions indicate 
  intuitive and paradoxical outcomes, respectively, following the colouring 
  of the inset in \Fref{fig:PhaseDiagrams}. In particular, the stippled red 
  region corresponds to $L\circ L=W$.
  (b) Mean position for an initial state in this region: $U_1$ and $U_2$ are 
  losing (open red markers), whereas the minimal three-step alternation 
  $U_2U_2U_1$ is winning (open blue diamonds).
  % \janote{Moví bastante de esta caption. Carlos revisa}
  }
  \label{fig:parrondo_emergence}
\end{figure} %*}}}

We next ask how likely Parrondo's paradox is to emerge for a fixed pair
$C_1,C_2$, sampling the initial coin state uniformly over the Bloch sphere.
% The event $L\circ L=W$ occurs for the states $\vec r_0$ satisfying
% $\vmcT(U_1)\cdot\vec r_0<0$, $\vmcT(U_2)\cdot\vec r_0<0$, and
% $\vmcT(U_2U_2U_1)\cdot\vec r_0>0$ --- the green region of
% Fig.~\ref{fig:parrondo_emergence}(a). Therefore, $\mathbb{P}(L\circ L=W\mid
% U_1,U_2)$ is given by the finite fraction of the Bloch sphere occupied by it,
% namely its area divided by $4\pi$.
For fixed $C_1,C_2$, sampling the initial coin state uniformly over the Bloch sphere, $L\circ L=W$ occurs when $\vmcT(U_1)\cdot\vec r_0<0$, $\vmcT(U_2)\cdot\vec r_0<0$, and $\vmcT(U_2U_2U_1)\cdot\vec r_0>0$, defining the red stippled region of Fig.~\ref{fig:parrondo_emergence}(a). Thus, $\mathbb{P}(L\circ L=W\mid U_1,U_2)$ is the fraction of the Bloch sphere occupied by this region, its area divided by $4\pi$.

A complementary probability conditions on the two individual 
strategies already being losing: restricting the sampling to the 
intersection of the losing hemispheres of $U_1$ and $U_2$, it is 
the red stippled area divided by the area of that $L\circ L$ intersection. 
For fixed $C_1$ and $C_2$, the paradoxical states still occupy a finite 
Bloch-sphere region, so a suitable pair of coin operators requires no 
fine tuning of the initial coin state. These areas can, in principle, 
be obtained from the residue-theorem method of \refsm{sm:residue_method}; 
for alternations such as $U_2U_2U_1$, however, this requires finding 
the roots of a degree-six polynomial and determining whether they lie 
inside the unit circle, beyond the scope of our analytical analysis.

%*}}}
\mysection{General criterion for Parrondo's paradox} % {{{*
\label{sec:general_criterion}
The transport-vector picture unifies the preceding cases and extends 
directly to $N$ strategies.
Let $U_i=SC_i$ have transport vectors $\vmcT_i=\vmcT(U_i)$, and let 
$U_{\mathcal W}$ denote any combination built from them through coin 
composition or step alternation. Let $L^{\circ N}$ ($W^{\circ N}$) 
denote $N$ individual losing (winning) strategies. For $L^{\circ N}=W$ 
or $W^{\circ N}=L$, $\vmcT(U_{\mathcal W})\cdot\vec r_0$ must have 
the opposite sign from $\vmcT_i\cdot\vec r_0$ for every $i$.
Let $\mathcal K=\operatorname{cone}\{\vmcT_1,\ldots,\vmcT_N\}
=\left\{\sum_{i=1}^N a_i\vmcT_i
\,\middle|\,a_i\in\mathbb R_{\geq0}\right\}$
be the convex cone generated by the individual transport vectors.
If no coin operator $C_i$ is antidiagonal in the computational basis, \Eref{eq:transport_vector_integrated} gives $(\vmcT_i)_z=\abs{(C_i)_{00}}^2/[1+\abs{(C_i)_{01}}]>0$ for every $i$. Hence, $\mathcal K$ is pointed, i.e., it cannot contain a nonzero vector and its opposite; see Fig.~\ref{fig:generalized_PP}(a).
If $\vmcT(U_{\mathcal W})\in\mathcal K$, then
$\vmcT(U_{\mathcal W})=\sum_{i=1}^N a_i\vmcT_i$ with $a_i\geq0$.
Therefore, whenever all $\vmcT_i\cdot\vec r_0$ have the same sign, their
nonnegative linear combination cannot have the opposite sign, and the paradox
is impossible.
If instead $\vmcT(U_{\mathcal W})\notin\mathcal K$, the pointedness of $\mathcal K$
guarantees the existence of a plane through the origin that strictly separates
$\vmcT(U_{\mathcal W})$ from the cone~\footnote{As a finitely generated cone,
$\mathcal K$ is closed and convex. Since
$\vmcT(U_{\mathcal W})\notin\mathcal K$, the separating-hyperplane
theorem~\cite{rockafellar_1970_convex} provides a vector $\vec m$ such that
$\vec m\cdot\vmcT(U_{\mathcal W})>0$ and
$\vec m\cdot\vec v\leq0$ for every $\vec v\in\mathcal K$. Because
$(\vmcT_i)_z>0$ for every generator, setting
$\vec m_\epsilon=\vec m-\epsilon\hat z$ for sufficiently small
$\epsilon>0$ gives $\vec m_\epsilon\cdot\vmcT_i<0$ for every $i$ while
preserving $\vec m_\epsilon\cdot\vmcT(U_{\mathcal W})>0$. Normalizing
$\vec m_\epsilon$ gives the unit normal $\hat n$.},
% shown from the edge as the black line in \Fref{fig:generalized_PP}(b), 
with $\mathcal K$ and $\vmcT(U_{\mathcal W})$ in opposite hemispheres.
Consider its normal $\hat n$ pointing toward the same hemisphere as 
$\vmcT(U_{\mathcal W})$, so choosing $\vec r_0=\hat n$ gives 
$\vmcT(U_{\mathcal W})\cdot\hat n>0$  and $\vmcT_i\cdot\hat n<0$ 
for every $i$. A full picture of this construction is presented in  \Fref{fig:generalized_PP}(b). Hence, $\hat n$ and $-\hat n$
define initial states 
realizing $L^{\circ N}=W$ and $W^{\circ N}=L$, respectively, yielding
\begin{equation}
  \text{Parrondo's paradox emerges}
  \,\Leftrightarrow\,
  \vmcT(U_{\mathcal W})\notin\mathcal K.
\end{equation}
Moreover, a paradoxical configuration is robust: continuity of the transport vectors preserves the strict separating inequalities under sufficiently small parameter variations. Therefore, any paradoxical set contains an open neighborhood and consequently has nonzero measure.

\begin{figure} %{{{* Convex cone
  \centering
  \includegraphics[width=\columnwidth]{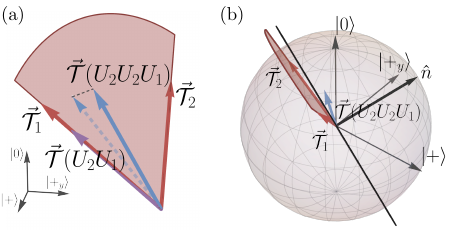}
  \caption{(a) Pointed convex cone $\mathcal K=\operatorname{cone}\{\vmcT_1,\vmcT_2\}$ 
  generated by the individual transport vectors $\vmcT_i=\vmcT(U_i)$, with
  $U_1$ and $U_2$ the same operators as in \Fref{fig:parrondo_emergence}. 
  The transport vector $\vmcT_4=\vmcT(U_2U_1)$ lies inside $\mathcal K$, 
  collinear with $\vmcT_1$, whereas 
  $\vmcT_3=\vmcT(U_2U_2U_1)$ lies outside. 
  (b) The same cone embedded in the Bloch sphere. A plane through the origin 
  separates $\vmcT_3$ from $\mathcal K$; it is viewed edge-on and therefore 
  appears as a black line. Its normal $\hat n$ defines an initial-state 
  direction for which $\vmcT_3\cdot\hat n$ has the opposite sign to 
  $\vmcT_i\cdot\hat n$ for every $i$.
  }
  \label{fig:generalized_PP}
\end{figure} % *}}}

This criterion shows that the protocols differ geometrically in how 
they constrain the combined transport vector. For period-two step 
alternation, $\vmcT(U_2U_1)=\gamma\vmcT(U_1)$ with $\gamma\geq0$, 
so the combined transport vector necessarily remains contained in 
$\mathcal K$, as in \Fref{fig:generalized_PP}(a), and Parrondo's 
paradox is impossible.
A third step removes this collinearity constraint: $\vmcT(U_2U_2U_1)$ 
can leave $\mathcal K$ and be separated from the cone as in 
Fig.~\ref{fig:generalized_PP}(b), opening the paradoxical region in 
Fig.~\ref{fig:parrondo_emergence}.
Coin composition has no analogous period-two obstruction, since 
already $\vmcT(SC_2C_1)$ can lie outside the cone spanned by 
$\vmcT(U_1)$ and $\vmcT(U_2)$.
Thus, the protocols impose different geometric constraints on the 
resulting transport vector while sharing the same criterion for 
Parrondo's paradox.

%*}}}
\mysection{Concluding remarks} % {{{*
By combining dynamics that individually drive the system in one direction, 
we show in this Letter how directed transport can be reversed in the simplest 
discrete-time quantum walk---a walker on a line steered by a two-level 
coin. This reversal is a transport instance of
Parrondo's paradox---a counterintuitive effect, encountered in settings
ranging from game theory to
ratchets~\cite{abbott_2010_asymmetry,amengual_2004_discrete}, in which a
combination overturns the outcomes of its constituents. We demonstrate 
the walker's long-time drift depends only on (i) the initial state of the coin,
and (ii) the transport vector, a geometric object that captures the contribution
of the evolution operator alone. Remarkably, this vector is also imprinted in
the stationary state of the coin, linking the walker's eventual motion to its
internal dynamics.
Expressed in terms of transport vectors, the emergence of Parrondo's
paradox follows a simple geometric criterion: reversal is possible if and 
only if the composite transport vector escapes the convex cone generated by the
individual ones.
This criterion establishes that the experimentally observed
reversals~\cite{jan_2020_experimental} persist asymptotically without
requiring a setup more complex than the standard one-dimensional,
two-state walk, and identifies the minimal realization of each protocol:
two coins for coin composition and a three-step period for step alternation.
Minimality, however, does not mean rarity: in one of the simplest settings
we consider, directly composing two coin operators,
one in every six uniformly sampled configurations produces a
paradoxical reversal, and for more complex compositions it remains robust
under parameter variations. 

More broadly, the transport vector can be identified with the Bloch
representation of the Brillouin-zone average of the group-velocity operator,
whose formulation applies to translation-invariant quantum
walks with bounded jump distances in arbitrary lattice dimension and with finite-dimensional
coins~\cite{ahlbrecht_2011_asymptotic}. This perspective suggests an
inverse-design problem: given a set of available walk operations, which
asymptotic currents are reachable by coherent combination, and what is the
shortest sequence that realizes a target current? Such a program would extend
the control-theoretic study of reachable quantum-walk
states~\cite{albertini_2012_controllability} to reachable transport responses.
In higher-dimensional lattices, the analogous question would concern
directional steering. The Parrondo effect studied here is therefore a minimal
instance of a broader design principle: coherent combination can realize a
transport direction that is inaccessible under every constituent evolution
for the same initial state. 
%*}}}
\mysection{Acknowledgments} % {{{*
Fruitful discussions with Amado Cabrera are acknowledged. 
J.A.d.L., C.P., and M.P.-M. thank Rodolfo Samayoa for his administrative support as local advisor of M.P.-M.
J.A.d.L. acknowledges SECIHTI through funding for graduate studies.
% C.P. acknowledges support by UNAM-PAPIIT IG101324 and SECIHTI CBF-2025-I-1548.
J.N. acknowledges support from the European Research Council (ERC Project
`Cocoquest’ 101043705).
C.P. acknowledges support by UNAM-PAPIIT IG101324, SECIHTI CBF-2025-I-1548 and UNAM PASPA–DGAPA.
C.P. acknowledges financial support from the Austrian Federal Ministry of Education, Science and Research via the Austrian Research Promotion Agency (FFG) through the project FO999921415 (Vanessa-QC) funded by the European Union{\textemdash}NextGenerationEU.
% Fruitful discussions with Amado Cabrera and Vikash Mittal are acknowledged.
%*}}}
\bibliographystyle{apsrev4-2}\bibliography{references}

%apsrev4-2.bst 2019-01-14 (MD) hand-edited version of apsrev4-1.bst
%Control: key (0)
%Control: author (72) initials jnrlst
%Control: editor formatted (1) identically to author
%Control: production of article title (-1) disabled
%Control: page (0) single
%Control: year (1) truncated
%Control: production of eprint (0) enabled
\begin{thebibliography}{59}%
\makeatletter
\providecommand \@ifxundefined [1]{%
 \@ifx{#1\undefined}
}%
\providecommand \@ifnum [1]{%
 \ifnum #1\expandafter \@firstoftwo
 \else \expandafter \@secondoftwo
 \fi
}%
\providecommand \@ifx [1]{%
 \ifx #1\expandafter \@firstoftwo
 \else \expandafter \@secondoftwo
 \fi
}%
\providecommand \natexlab [1]{#1}%
\providecommand \enquote  [1]{``#1''}%
\providecommand \bibnamefont  [1]{#1}%
\providecommand \bibfnamefont [1]{#1}%
\providecommand \citenamefont [1]{#1}%
\providecommand \href@noop [0]{\@secondoftwo}%
\providecommand \href [0]{\begingroup \@sanitize@url \@href}%
\providecommand \@href[1]{\@@startlink{#1}\@@href}%
\providecommand \@@href[1]{\endgroup#1\@@endlink}%
\providecommand \@sanitize@url [0]{\catcode `\\12\catcode `\$12\catcode `\&12\catcode `\#12\catcode `\^12\catcode `\_12\catcode `\%12\relax}%
\providecommand \@@startlink[1]{}%
\providecommand \@@endlink[0]{}%
\providecommand \url  [0]{\begingroup\@sanitize@url \@url }%
\providecommand \@url [1]{\endgroup\@href {#1}{\urlprefix }}%
\providecommand \urlprefix  [0]{URL }%
\providecommand \Eprint [0]{\href }%
\providecommand \doibase [0]{https://doi.org/}%
\providecommand \selectlanguage [0]{\@gobble}%
\providecommand \bibinfo  [0]{\@secondoftwo}%
\providecommand \bibfield  [0]{\@secondoftwo}%
\providecommand \translation [1]{[#1]}%
\providecommand \BibitemOpen [0]{}%
\providecommand \bibitemStop [0]{}%
\providecommand \bibitemNoStop [0]{.\EOS\space}%
\providecommand \EOS [0]{\spacefactor3000\relax}%
\providecommand \BibitemShut  [1]{\csname bibitem#1\endcsname}%
\let\auto@bib@innerbib\@empty
%</preamble>
\bibitem [{\citenamefont {Parrondo}(1996)}]{Parrondo1996}%
  \BibitemOpen
  \bibfield  {author} {\bibinfo {author} {\bibfnamefont {J.~M.~R.}\ \bibnamefont {Parrondo}}} (\bibinfo {year} {1996}),\ \bibinfo {note} {presentation at the EEC HC\&M Network on Complexity and Chaos (\#ERBCHRX-CT940546), ISI, Torino, Italy. Unpublished}\BibitemShut {NoStop}%
\bibitem [{\citenamefont {Harmer}\ and\ \citenamefont {Abbott}(1999)}]{harmer_1999_losing}%
  \BibitemOpen
  \bibfield  {author} {\bibinfo {author} {\bibfnamefont {G.~P.}\ \bibnamefont {Harmer}}\ and\ \bibinfo {author} {\bibfnamefont {D.}~\bibnamefont {Abbott}},\ }\href {https://doi.org/10.1038/47220} {\bibfield  {journal} {\bibinfo  {journal} {Nature}\ }\textbf {\bibinfo {volume} {402}},\ \bibinfo {pages} {864} (\bibinfo {year} {1999})}\BibitemShut {NoStop}%
\bibitem [{\citenamefont {Wolf}\ \emph {et~al.}(2005)\citenamefont {Wolf}, \citenamefont {Vazirani},\ and\ \citenamefont {Arkin}}]{wolf_2005_diversity}%
  \BibitemOpen
  \bibfield  {author} {\bibinfo {author} {\bibfnamefont {D.~M.}\ \bibnamefont {Wolf}}, \bibinfo {author} {\bibfnamefont {V.~V.}\ \bibnamefont {Vazirani}},\ and\ \bibinfo {author} {\bibfnamefont {A.~P.}\ \bibnamefont {Arkin}},\ }\href {https://doi.org/10.1016/j.jtbi.2004.11.020} {\bibfield  {journal} {\bibinfo  {journal} {Journal of Theoretical Biology}\ }\textbf {\bibinfo {volume} {234}},\ \bibinfo {pages} {227} (\bibinfo {year} {2005})}\BibitemShut {NoStop}%
\bibitem [{\citenamefont {Abbott}(2010)}]{abbott_2010_asymmetry}%
  \BibitemOpen
  \bibfield  {author} {\bibinfo {author} {\bibfnamefont {D.}~\bibnamefont {Abbott}},\ }\href {https://doi.org/10.1142/S0219477510000010} {\bibfield  {journal} {\bibinfo  {journal} {Fluctuation and Noise Letters}\ }\textbf {\bibinfo {volume} {09}},\ \bibinfo {pages} {129} (\bibinfo {year} {2010})}\BibitemShut {NoStop}%
\bibitem [{\citenamefont {Wen}\ and\ \citenamefont {Cheong}(2024)}]{wen_2024_parrondo_review}%
  \BibitemOpen
  \bibfield  {author} {\bibinfo {author} {\bibfnamefont {T.}~\bibnamefont {Wen}}\ and\ \bibinfo {author} {\bibfnamefont {K.~H.}\ \bibnamefont {Cheong}},\ }\href {https://doi.org/10.1016/j.plrev.2024.08.002} {\bibfield  {journal} {\bibinfo  {journal} {Physics of Life Reviews}\ }\textbf {\bibinfo {volume} {51}},\ \bibinfo {pages} {33} (\bibinfo {year} {2024})}\BibitemShut {NoStop}%
\bibitem [{\citenamefont {Cheong}\ \emph {et~al.}(2019)\citenamefont {Cheong}, \citenamefont {Koh},\ and\ \citenamefont {Jones}}]{cheong_2019_paradoxical}%
  \BibitemOpen
  \bibfield  {author} {\bibinfo {author} {\bibfnamefont {K.~H.}\ \bibnamefont {Cheong}}, \bibinfo {author} {\bibfnamefont {J.~M.}\ \bibnamefont {Koh}},\ and\ \bibinfo {author} {\bibfnamefont {M.~C.}\ \bibnamefont {Jones}},\ }\href {https://doi.org/10.1002/bies.201900027} {\bibfield  {journal} {\bibinfo  {journal} {BioEssays}\ }\textbf {\bibinfo {volume} {41}},\ \bibinfo {pages} {1900027} (\bibinfo {year} {2019})}\BibitemShut {NoStop}%
\bibitem [{\citenamefont {Cheong}\ \emph {et~al.}(2020)\citenamefont {Cheong}, \citenamefont {Wen},\ and\ \citenamefont {Lai}}]{cheong_2020_relieving}%
  \BibitemOpen
  \bibfield  {author} {\bibinfo {author} {\bibfnamefont {K.~H.}\ \bibnamefont {Cheong}}, \bibinfo {author} {\bibfnamefont {T.}~\bibnamefont {Wen}},\ and\ \bibinfo {author} {\bibfnamefont {J.~W.}\ \bibnamefont {Lai}},\ }\href {https://doi.org/10.1002/advs.202002324} {\bibfield  {journal} {\bibinfo  {journal} {Advanced Science}\ }\textbf {\bibinfo {volume} {7}},\ \bibinfo {pages} {2002324} (\bibinfo {year} {2020})}\BibitemShut {NoStop}%
\bibitem [{\citenamefont {Cheong}\ \emph {et~al.}(2022)\citenamefont {Cheong}, \citenamefont {Wen}, \citenamefont {Benler}, \citenamefont {Koh},\ and\ \citenamefont {Koonin}}]{cheong_2022_lysis}%
  \BibitemOpen
  \bibfield  {author} {\bibinfo {author} {\bibfnamefont {K.~H.}\ \bibnamefont {Cheong}}, \bibinfo {author} {\bibfnamefont {T.}~\bibnamefont {Wen}}, \bibinfo {author} {\bibfnamefont {S.}~\bibnamefont {Benler}}, \bibinfo {author} {\bibfnamefont {J.~M.}\ \bibnamefont {Koh}},\ and\ \bibinfo {author} {\bibfnamefont {E.~V.}\ \bibnamefont {Koonin}},\ }\href {https://doi.org/10.1073/pnas.2115145119} {\bibfield  {journal} {\bibinfo  {journal} {Proceedings of the National Academy of Sciences}\ }\textbf {\bibinfo {volume} {119}},\ \bibinfo {pages} {e2115145119} (\bibinfo {year} {2022})}\BibitemShut {NoStop}%
\bibitem [{\citenamefont {Gokhale}\ and\ \citenamefont {Sharma}(2023)}]{gokhale_2023_crop}%
  \BibitemOpen
  \bibfield  {author} {\bibinfo {author} {\bibfnamefont {C.~S.}\ \bibnamefont {Gokhale}}\ and\ \bibinfo {author} {\bibfnamefont {N.}~\bibnamefont {Sharma}},\ }\href {https://doi.org/10.1098/rsos.221401} {\bibfield  {journal} {\bibinfo  {journal} {Royal Society Open Science}\ }\textbf {\bibinfo {volume} {10}},\ \bibinfo {pages} {221401} (\bibinfo {year} {2023})}\BibitemShut {NoStop}%
\bibitem [{\citenamefont {Spurgin}\ and\ \citenamefont {Tamarkin}(2005)}]{spurgin_2005_switching}%
  \BibitemOpen
  \bibfield  {author} {\bibinfo {author} {\bibfnamefont {R.}~\bibnamefont {Spurgin}}\ and\ \bibinfo {author} {\bibfnamefont {M.}~\bibnamefont {Tamarkin}},\ }\href {https://doi.org/10.1207/s15427579jpfm0601_3} {\bibfield  {journal} {\bibinfo  {journal} {Journal of Behavioral Finance}\ }\textbf {\bibinfo {volume} {6}},\ \bibinfo {pages} {15} (\bibinfo {year} {2005})}\BibitemShut {NoStop}%
\bibitem [{\citenamefont {Lai}\ and\ \citenamefont {Cheong}(2020{\natexlab{a}})}]{lai_2020_social_review}%
  \BibitemOpen
  \bibfield  {author} {\bibinfo {author} {\bibfnamefont {J.~W.}\ \bibnamefont {Lai}}\ and\ \bibinfo {author} {\bibfnamefont {K.~H.}\ \bibnamefont {Cheong}},\ }\href {https://doi.org/10.1007/s11071-020-05738-9} {\bibfield  {journal} {\bibinfo  {journal} {Nonlinear Dynamics}\ }\textbf {\bibinfo {volume} {101}},\ \bibinfo {pages} {1} (\bibinfo {year} {2020}{\natexlab{a}})}\BibitemShut {NoStop}%
\bibitem [{\citenamefont {Mishra}\ \emph {et~al.}(2024)\citenamefont {Mishra}, \citenamefont {Wen},\ and\ \citenamefont {Cheong}}]{mishra_2024_routing}%
  \BibitemOpen
  \bibfield  {author} {\bibinfo {author} {\bibfnamefont {A.}~\bibnamefont {Mishra}}, \bibinfo {author} {\bibfnamefont {T.}~\bibnamefont {Wen}},\ and\ \bibinfo {author} {\bibfnamefont {K.~H.}\ \bibnamefont {Cheong}},\ }\href {https://doi.org/10.1103/PhysRevResearch.6.L012037} {\bibfield  {journal} {\bibinfo  {journal} {Physical Review Research}\ }\textbf {\bibinfo {volume} {6}},\ \bibinfo {pages} {L012037} (\bibinfo {year} {2024})}\BibitemShut {NoStop}%
\bibitem [{\citenamefont {Pires}\ \emph {et~al.}(2026)\citenamefont {Pires}, \citenamefont {Pinto}, \citenamefont {C{\'a}novas},\ and\ \citenamefont {Queir{\'o}s}}]{pires_2026_chaos}%
  \BibitemOpen
  \bibfield  {author} {\bibinfo {author} {\bibfnamefont {M.~A.}\ \bibnamefont {Pires}}, \bibinfo {author} {\bibfnamefont {E.~P.}\ \bibnamefont {Pinto}}, \bibinfo {author} {\bibfnamefont {J.~S.}\ \bibnamefont {C{\'a}novas}},\ and\ \bibinfo {author} {\bibfnamefont {S.~M.~D.}\ \bibnamefont {Queir{\'o}s}},\ }\bibfield  {journal} {\bibinfo  {journal} {The European Physical Journal Special Topics}\ }\href {https://doi.org/10.1140/epjs/s11734-026-02222-0} {10.1140/epjs/s11734-026-02222-0} (\bibinfo {year} {2026}),\ \Eprint {https://arxiv.org/abs/2602.08135} {arXiv:2602.08135 [nlin]} \BibitemShut {NoStop}%
\bibitem [{\citenamefont {Amengual}\ \emph {et~al.}(2004)\citenamefont {Amengual}, \citenamefont {Allison}, \citenamefont {Toral},\ and\ \citenamefont {Abbott}}]{amengual_2004_discrete}%
  \BibitemOpen
  \bibfield  {author} {\bibinfo {author} {\bibfnamefont {P.}~\bibnamefont {Amengual}}, \bibinfo {author} {\bibfnamefont {A.}~\bibnamefont {Allison}}, \bibinfo {author} {\bibfnamefont {R.}~\bibnamefont {Toral}},\ and\ \bibinfo {author} {\bibfnamefont {D.}~\bibnamefont {Abbott}},\ }\href {https://doi.org/10.1098/rspa.2004.1283} {\bibfield  {journal} {\bibinfo  {journal} {Proceedings of the Royal Society of London. Series A: Mathematical, Physical and Engineering Sciences}\ }\textbf {\bibinfo {volume} {460}},\ \bibinfo {pages} {2269} (\bibinfo {year} {2004})}\BibitemShut {NoStop}%
\bibitem [{\citenamefont {Flitney}(2012)}]{flitney_2012_quantum}%
  \BibitemOpen
  \bibfield  {author} {\bibinfo {author} {\bibfnamefont {A.~P.}\ \bibnamefont {Flitney}},\ }\href {https://doi.org/10.48550/arXiv.1209.2252} {\bibinfo {title} {Quantum {{Parrondo}}'s games using quantum walks}} (\bibinfo {year} {2012}),\ \Eprint {https://arxiv.org/abs/1209.2252} {arXiv:1209.2252 [quant-ph]} \BibitemShut {NoStop}%
\bibitem [{\citenamefont {Jan}\ \emph {et~al.}(2020)\citenamefont {Jan}, \citenamefont {Wang}, \citenamefont {Xu}, \citenamefont {Pan}, \citenamefont {Chen}, \citenamefont {Han}, \citenamefont {Li}, \citenamefont {Guo},\ and\ \citenamefont {Abbott}}]{jan_2020_experimental}%
  \BibitemOpen
  \bibfield  {author} {\bibinfo {author} {\bibfnamefont {M.}~\bibnamefont {Jan}}, \bibinfo {author} {\bibfnamefont {Q.-Q.}\ \bibnamefont {Wang}}, \bibinfo {author} {\bibfnamefont {X.-Y.}\ \bibnamefont {Xu}}, \bibinfo {author} {\bibfnamefont {W.-W.}\ \bibnamefont {Pan}}, \bibinfo {author} {\bibfnamefont {Z.}~\bibnamefont {Chen}}, \bibinfo {author} {\bibfnamefont {Y.-J.}\ \bibnamefont {Han}}, \bibinfo {author} {\bibfnamefont {C.-F.}\ \bibnamefont {Li}}, \bibinfo {author} {\bibfnamefont {G.-C.}\ \bibnamefont {Guo}},\ and\ \bibinfo {author} {\bibfnamefont {D.}~\bibnamefont {Abbott}},\ }\href {https://doi.org/10.1002/qute.201900127} {\bibfield  {journal} {\bibinfo  {journal} {Advanced Quantum Technologies}\ }\textbf {\bibinfo {volume} {3}},\ \bibinfo {pages} {1900127} (\bibinfo {year} {2020})}\BibitemShut {NoStop}%
\bibitem [{\citenamefont {Aharonov}\ \emph {et~al.}(1993)\citenamefont {Aharonov}, \citenamefont {Davidovich},\ and\ \citenamefont {Zagury}}]{Aharonov1993}%
  \BibitemOpen
  \bibfield  {author} {\bibinfo {author} {\bibfnamefont {Y.}~\bibnamefont {Aharonov}}, \bibinfo {author} {\bibfnamefont {L.}~\bibnamefont {Davidovich}},\ and\ \bibinfo {author} {\bibfnamefont {N.}~\bibnamefont {Zagury}},\ }\href {https://doi.org/10.1103/PhysRevA.48.1687} {\bibfield  {journal} {\bibinfo  {journal} {Phys. Rev. A}\ }\textbf {\bibinfo {volume} {48}},\ \bibinfo {pages} {1687} (\bibinfo {year} {1993})}\BibitemShut {NoStop}%
\bibitem [{\citenamefont {Kempe}(2003)}]{Kempe2003}%
  \BibitemOpen
  \bibfield  {author} {\bibinfo {author} {\bibfnamefont {J.}~\bibnamefont {Kempe}},\ }\href {https://doi.org/10.1080/00107151031000110776} {\bibfield  {journal} {\bibinfo  {journal} {Contemporary Physics}\ }\textbf {\bibinfo {volume} {44}},\ \bibinfo {pages} {307} (\bibinfo {year} {2003})}\BibitemShut {NoStop}%
\bibitem [{\citenamefont {Flitney}\ \emph {et~al.}(2004)\citenamefont {Flitney}, \citenamefont {Abbott},\ and\ \citenamefont {Johnson}}]{flitney2004}%
  \BibitemOpen
  \bibfield  {author} {\bibinfo {author} {\bibfnamefont {A.~P.}\ \bibnamefont {Flitney}}, \bibinfo {author} {\bibfnamefont {D.}~\bibnamefont {Abbott}},\ and\ \bibinfo {author} {\bibfnamefont {N.~F.}\ \bibnamefont {Johnson}},\ }\href {https://doi.org/10.1088/0305-4470/37/30/013} {\bibfield  {journal} {\bibinfo  {journal} {Journal of Physics A: Mathematical and General}\ }\textbf {\bibinfo {volume} {37}},\ \bibinfo {pages} {7581} (\bibinfo {year} {2004})}\BibitemShut {NoStop}%
\bibitem [{\citenamefont {Ko{\v{s}}{\'i}k}\ \emph {et~al.}(2007)\citenamefont {Ko{\v{s}}{\'i}k}, \citenamefont {Miszczak},\ and\ \citenamefont {Bu{\v{z}}ek}}]{kosik_2007_quantum}%
  \BibitemOpen
  \bibfield  {author} {\bibinfo {author} {\bibfnamefont {J.}~\bibnamefont {Ko{\v{s}}{\'i}k}}, \bibinfo {author} {\bibfnamefont {J.~A.}\ \bibnamefont {Miszczak}},\ and\ \bibinfo {author} {\bibfnamefont {V.}~\bibnamefont {Bu{\v{z}}ek}},\ }\href {https://doi.org/10.1080/09500340701408722} {\bibfield  {journal} {\bibinfo  {journal} {Journal of Modern Optics}\ }\textbf {\bibinfo {volume} {54}},\ \bibinfo {pages} {2275} (\bibinfo {year} {2007})}\BibitemShut {NoStop}%
\bibitem [{\citenamefont {Bulger}\ \emph {et~al.}(2008)\citenamefont {Bulger}, \citenamefont {Freckleton},\ and\ \citenamefont {Twamley}}]{bulger_2008_positiondependent}%
  \BibitemOpen
  \bibfield  {author} {\bibinfo {author} {\bibfnamefont {D.}~\bibnamefont {Bulger}}, \bibinfo {author} {\bibfnamefont {J.}~\bibnamefont {Freckleton}},\ and\ \bibinfo {author} {\bibfnamefont {J.}~\bibnamefont {Twamley}},\ }\href {https://doi.org/10.1088/1367-2630/10/9/093014} {\bibfield  {journal} {\bibinfo  {journal} {New J. Phys.}\ }\textbf {\bibinfo {volume} {10}},\ \bibinfo {pages} {093014} (\bibinfo {year} {2008})}\BibitemShut {NoStop}%
\bibitem [{\citenamefont {Chandrashekar}\ and\ \citenamefont {Banerjee}(2011)}]{chandrashekar_2011_parrondos}%
  \BibitemOpen
  \bibfield  {author} {\bibinfo {author} {\bibfnamefont {C.~M.}\ \bibnamefont {Chandrashekar}}\ and\ \bibinfo {author} {\bibfnamefont {S.}~\bibnamefont {Banerjee}},\ }\href {https://doi.org/10.1016/j.physleta.2011.02.071} {\bibfield  {journal} {\bibinfo  {journal} {Physics Letters A}\ }\textbf {\bibinfo {volume} {375}},\ \bibinfo {pages} {1553} (\bibinfo {year} {2011})}\BibitemShut {NoStop}%
\bibitem [{\citenamefont {Li}\ \emph {et~al.}(2013)\citenamefont {Li}, \citenamefont {Zhang},\ and\ \citenamefont {Guo}}]{li_2013_quantum}%
  \BibitemOpen
  \bibfield  {author} {\bibinfo {author} {\bibfnamefont {M.}~\bibnamefont {Li}}, \bibinfo {author} {\bibfnamefont {Y.-S.}\ \bibnamefont {Zhang}},\ and\ \bibinfo {author} {\bibfnamefont {G.-C.}\ \bibnamefont {Guo}},\ }\href {https://doi.org/10.1142/S0219477513500247} {\bibfield  {journal} {\bibinfo  {journal} {Fluctuation and Noise Letters}\ }\textbf {\bibinfo {volume} {12}},\ \bibinfo {pages} {1350024} (\bibinfo {year} {2013})}\BibitemShut {NoStop}%
\bibitem [{\citenamefont {Pawela}\ and\ \citenamefont {S{\l}adkowski}(2013)}]{pawela_2013_cooperative}%
  \BibitemOpen
  \bibfield  {author} {\bibinfo {author} {\bibfnamefont {{\L}.}~\bibnamefont {Pawela}}\ and\ \bibinfo {author} {\bibfnamefont {J.}~\bibnamefont {S{\l}adkowski}},\ }\href {https://doi.org/10.1016/j.physd.2013.04.010} {\bibfield  {journal} {\bibinfo  {journal} {Physica D: Nonlinear Phenomena}\ }\textbf {\bibinfo {volume} {256--257}},\ \bibinfo {pages} {51} (\bibinfo {year} {2013})}\BibitemShut {NoStop}%
\bibitem [{\citenamefont {Rajendran}\ and\ \citenamefont {Benjamin}(2018{\natexlab{a}})}]{rajendran_2018_implementing}%
  \BibitemOpen
  \bibfield  {author} {\bibinfo {author} {\bibfnamefont {J.}~\bibnamefont {Rajendran}}\ and\ \bibinfo {author} {\bibfnamefont {C.}~\bibnamefont {Benjamin}},\ }\href {https://doi.org/10.1098/rsos.171599} {\bibfield  {journal} {\bibinfo  {journal} {Royal Society Open Science}\ }\textbf {\bibinfo {volume} {5}},\ \bibinfo {pages} {171599} (\bibinfo {year} {2018}{\natexlab{a}})}\BibitemShut {NoStop}%
\bibitem [{\citenamefont {Rajendran}\ and\ \citenamefont {Benjamin}(2018{\natexlab{b}})}]{rajendran_2018_playing}%
  \BibitemOpen
  \bibfield  {author} {\bibinfo {author} {\bibfnamefont {J.}~\bibnamefont {Rajendran}}\ and\ \bibinfo {author} {\bibfnamefont {C.}~\bibnamefont {Benjamin}},\ }\href {https://doi.org/10.1209/0295-5075/122/40004} {\bibfield  {journal} {\bibinfo  {journal} {EPL}\ }\textbf {\bibinfo {volume} {122}},\ \bibinfo {pages} {40004} (\bibinfo {year} {2018}{\natexlab{b}})}\BibitemShut {NoStop}%
\bibitem [{\citenamefont {Machida}\ and\ \citenamefont {Gr{\"u}nbaum}(2018)}]{machida_2018_limit}%
  \BibitemOpen
  \bibfield  {author} {\bibinfo {author} {\bibfnamefont {T.}~\bibnamefont {Machida}}\ and\ \bibinfo {author} {\bibfnamefont {F.~A.}\ \bibnamefont {Gr{\"u}nbaum}},\ }\href {https://doi.org/10.1007/s11128-018-2009-4} {\bibfield  {journal} {\bibinfo  {journal} {Quantum Information Processing}\ }\textbf {\bibinfo {volume} {17}},\ \bibinfo {pages} {241} (\bibinfo {year} {2018})}\BibitemShut {NoStop}%
\bibitem [{\citenamefont {Lai}\ and\ \citenamefont {Cheong}(2020{\natexlab{b}})}]{lai_2020_parrondo}%
  \BibitemOpen
  \bibfield  {author} {\bibinfo {author} {\bibfnamefont {J.~W.}\ \bibnamefont {Lai}}\ and\ \bibinfo {author} {\bibfnamefont {K.~H.}\ \bibnamefont {Cheong}},\ }\href {https://doi.org/10.1103/PhysRevE.101.052212} {\bibfield  {journal} {\bibinfo  {journal} {Phys. Rev. E}\ }\textbf {\bibinfo {volume} {101}},\ \bibinfo {pages} {052212} (\bibinfo {year} {2020}{\natexlab{b}})}\BibitemShut {NoStop}%
\bibitem [{\citenamefont {Lai}\ \emph {et~al.}(2020)\citenamefont {Lai}, \citenamefont {Tan}, \citenamefont {Lu}, \citenamefont {Yap},\ and\ \citenamefont {Cheong}}]{lai2020parrondo}%
  \BibitemOpen
  \bibfield  {author} {\bibinfo {author} {\bibfnamefont {J.~W.}\ \bibnamefont {Lai}}, \bibinfo {author} {\bibfnamefont {J.~R.~A.}\ \bibnamefont {Tan}}, \bibinfo {author} {\bibfnamefont {H.}~\bibnamefont {Lu}}, \bibinfo {author} {\bibfnamefont {Z.~R.}\ \bibnamefont {Yap}},\ and\ \bibinfo {author} {\bibfnamefont {K.~H.}\ \bibnamefont {Cheong}},\ }\href {https://doi.org/10.1103/PhysRevE.102.012213} {\bibfield  {journal} {\bibinfo  {journal} {Physical Review E}\ }\textbf {\bibinfo {volume} {102}},\ \bibinfo {pages} {012213} (\bibinfo {year} {2020})}\BibitemShut {NoStop}%
\bibitem [{\citenamefont {Pires}\ and\ \citenamefont {Queir{\'o}s}(2020)}]{pires_2020_parrondos}%
  \BibitemOpen
  \bibfield  {author} {\bibinfo {author} {\bibfnamefont {M.~A.}\ \bibnamefont {Pires}}\ and\ \bibinfo {author} {\bibfnamefont {S.~M.~D.}\ \bibnamefont {Queir{\'o}s}},\ }\href {https://doi.org/10.1103/PhysRevE.102.042124} {\bibfield  {journal} {\bibinfo  {journal} {Physical Review E}\ }\textbf {\bibinfo {volume} {102}},\ \bibinfo {pages} {042124} (\bibinfo {year} {2020})}\BibitemShut {NoStop}%
\bibitem [{\citenamefont {Walczak}\ and\ \citenamefont {Bauer}(2021)}]{walczak_2021_parrondos}%
  \BibitemOpen
  \bibfield  {author} {\bibinfo {author} {\bibfnamefont {Z.}~\bibnamefont {Walczak}}\ and\ \bibinfo {author} {\bibfnamefont {J.~H.}\ \bibnamefont {Bauer}},\ }\href {https://doi.org/10.1103/PhysRevE.104.064209} {\bibfield  {journal} {\bibinfo  {journal} {Physical Review E}\ }\textbf {\bibinfo {volume} {104}},\ \bibinfo {pages} {064209} (\bibinfo {year} {2021})}\BibitemShut {NoStop}%
\bibitem [{\citenamefont {Walczak}\ and\ \citenamefont {Bauer}(2022)}]{walczak_2022_parrondos}%
  \BibitemOpen
  \bibfield  {author} {\bibinfo {author} {\bibfnamefont {Z.}~\bibnamefont {Walczak}}\ and\ \bibinfo {author} {\bibfnamefont {J.~H.}\ \bibnamefont {Bauer}},\ }\href {https://doi.org/10.1103/PhysRevE.105.064211} {\bibfield  {journal} {\bibinfo  {journal} {Physical Review E}\ }\textbf {\bibinfo {volume} {105}},\ \bibinfo {pages} {064211} (\bibinfo {year} {2022})}\BibitemShut {NoStop}%
\bibitem [{\citenamefont {Trautmann}\ \emph {et~al.}(2022)\citenamefont {Trautmann}, \citenamefont {Groiseau},\ and\ \citenamefont {Wimberger}}]{trautmann_2022_momentum}%
  \BibitemOpen
  \bibfield  {author} {\bibinfo {author} {\bibfnamefont {G.}~\bibnamefont {Trautmann}}, \bibinfo {author} {\bibfnamefont {C.}~\bibnamefont {Groiseau}},\ and\ \bibinfo {author} {\bibfnamefont {S.}~\bibnamefont {Wimberger}},\ }\href {https://doi.org/10.1142/S0219477522500535} {\bibfield  {journal} {\bibinfo  {journal} {Fluctuation and Noise Letters}\ }\textbf {\bibinfo {volume} {21}},\ \bibinfo {pages} {2250053} (\bibinfo {year} {2022})}\BibitemShut {NoStop}%
\bibitem [{\citenamefont {Jan}\ \emph {et~al.}(2023)\citenamefont {Jan}, \citenamefont {Khan},\ and\ \citenamefont {Xianlong}}]{jan_2023_territories}%
  \BibitemOpen
  \bibfield  {author} {\bibinfo {author} {\bibfnamefont {M.}~\bibnamefont {Jan}}, \bibinfo {author} {\bibfnamefont {N.~A.}\ \bibnamefont {Khan}},\ and\ \bibinfo {author} {\bibfnamefont {G.}~\bibnamefont {Xianlong}},\ }\href {https://doi.org/10.1140/epjp/s13360-023-03685-z} {\bibfield  {journal} {\bibinfo  {journal} {Eur. Phys. J. Plus}\ }\textbf {\bibinfo {volume} {138}},\ \bibinfo {pages} {1} (\bibinfo {year} {2023})}\BibitemShut {NoStop}%
\bibitem [{\citenamefont {Mielke}(2023)}]{mielke_2023_lowdimensional}%
  \BibitemOpen
  \bibfield  {author} {\bibinfo {author} {\bibfnamefont {A.}~\bibnamefont {Mielke}},\ }\href {https://doi.org/10.48550/arXiv.2306.16845} {\bibinfo {title} {Quantum {{Parrondo Games}} in {{Low-Dimensional Hilbert Spaces}}}} (\bibinfo {year} {2023}),\ \Eprint {https://arxiv.org/abs/2306.16845} {arXiv:2306.16845 [quant-ph]} \BibitemShut {NoStop}%
\bibitem [{\citenamefont {Walczak}\ and\ \citenamefont {Bauer}(2023)}]{walczak_2023_noiseinduced}%
  \BibitemOpen
  \bibfield  {author} {\bibinfo {author} {\bibfnamefont {Z.}~\bibnamefont {Walczak}}\ and\ \bibinfo {author} {\bibfnamefont {J.~H.}\ \bibnamefont {Bauer}},\ }\href {https://doi.org/10.1103/PhysRevE.108.044212} {\bibfield  {journal} {\bibinfo  {journal} {Physical Review E}\ }\textbf {\bibinfo {volume} {108}},\ \bibinfo {pages} {044212} (\bibinfo {year} {2023})}\BibitemShut {NoStop}%
\bibitem [{\citenamefont {Ximenes}\ \emph {et~al.}(2024)\citenamefont {Ximenes}, \citenamefont {Pires},\ and\ \citenamefont {{Villas-B{\^o}as}}}]{ximenes2024parrondo}%
  \BibitemOpen
  \bibfield  {author} {\bibinfo {author} {\bibfnamefont {J.~J.}\ \bibnamefont {Ximenes}}, \bibinfo {author} {\bibfnamefont {M.~A.}\ \bibnamefont {Pires}},\ and\ \bibinfo {author} {\bibfnamefont {J.~M.}\ \bibnamefont {{Villas-B{\^o}as}}},\ }\href {https://doi.org/10.1103/PhysRevA.109.032417} {\bibfield  {journal} {\bibinfo  {journal} {Phys. Rev. A}\ }\textbf {\bibinfo {volume} {109}},\ \bibinfo {pages} {032417} (\bibinfo {year} {2024})}\BibitemShut {NoStop}%
\bibitem [{\citenamefont {Kadiri}(2024)}]{kadiri_2024_scouring}%
  \BibitemOpen
  \bibfield  {author} {\bibinfo {author} {\bibfnamefont {G.}~\bibnamefont {Kadiri}},\ }\href {https://doi.org/10.1103/PhysRevA.110.022421} {\bibfield  {journal} {\bibinfo  {journal} {Physical Review A}\ }\textbf {\bibinfo {volume} {110}},\ \bibinfo {pages} {022421} (\bibinfo {year} {2024})}\BibitemShut {NoStop}%
\bibitem [{\citenamefont {Mittal}\ and\ \citenamefont {Huang}(2024)}]{mittal_2024_parrondos}%
  \BibitemOpen
  \bibfield  {author} {\bibinfo {author} {\bibfnamefont {V.}~\bibnamefont {Mittal}}\ and\ \bibinfo {author} {\bibfnamefont {Y.-P.}\ \bibnamefont {Huang}},\ }\href {https://doi.org/10.1103/PhysRevA.110.052440} {\bibfield  {journal} {\bibinfo  {journal} {Physical Review A}\ }\textbf {\bibinfo {volume} {110}},\ \bibinfo {pages} {052440} (\bibinfo {year} {2024})},\ \Eprint {https://arxiv.org/abs/2407.16558} {arXiv:2407.16558 [quant-ph]} \BibitemShut {NoStop}%
\bibitem [{\citenamefont {Walczak}\ and\ \citenamefont {Bauer}(2024)}]{walczak2024parrondo}%
  \BibitemOpen
  \bibfield  {author} {\bibinfo {author} {\bibfnamefont {Z.}~\bibnamefont {Walczak}}\ and\ \bibinfo {author} {\bibfnamefont {J.~H.}\ \bibnamefont {Bauer}},\ }\href {https://doi.org/10.1007/s11128-024-04614-4} {\bibfield  {journal} {\bibinfo  {journal} {Quantum Information Processing}\ }\textbf {\bibinfo {volume} {23}},\ \bibinfo {pages} {408} (\bibinfo {year} {2024})}\BibitemShut {NoStop}%
\bibitem [{\citenamefont {Walczak}\ and\ \citenamefont {Bauer}(2025)}]{walczak_2025_parrondos}%
  \BibitemOpen
  \bibfield  {author} {\bibinfo {author} {\bibfnamefont {Z.}~\bibnamefont {Walczak}}\ and\ \bibinfo {author} {\bibfnamefont {J.~H.}\ \bibnamefont {Bauer}},\ }\href {https://doi.org/10.1103/k6nn-n6c4} {\bibfield  {journal} {\bibinfo  {journal} {Physical Review E}\ }\textbf {\bibinfo {volume} {111}},\ \bibinfo {pages} {064218} (\bibinfo {year} {2025})}\BibitemShut {NoStop}%
\bibitem [{\citenamefont {Chen}\ \emph {et~al.}(2025)\citenamefont {Chen}, \citenamefont {Chang},\ and\ \citenamefont {Huang}}]{chen_2025_singlequbit}%
  \BibitemOpen
  \bibfield  {author} {\bibinfo {author} {\bibfnamefont {Y.-H.}\ \bibnamefont {Chen}}, \bibinfo {author} {\bibfnamefont {R.-Y.}\ \bibnamefont {Chang}},\ and\ \bibinfo {author} {\bibfnamefont {T.-W.}\ \bibnamefont {Huang}},\ }in\ \href {https://doi.org/10.1109/QCE65121.2025.10396} {\emph {\bibinfo {booktitle} {2025 IEEE International Conference on Quantum Computing and Engineering (QCE)}}},\ Vol.~\bibinfo {volume} {2}\ (\bibinfo {year} {2025})\ pp.\ \bibinfo {pages} {462--463}\BibitemShut {NoStop}%
\bibitem [{\citenamefont {Ximenes}\ \emph {et~al.}(2026)\citenamefont {Ximenes}, \citenamefont {Pires},\ and\ \citenamefont {Villas-B{\^o}as}}]{ximenes_2026_aperiodic}%
  \BibitemOpen
  \bibfield  {author} {\bibinfo {author} {\bibfnamefont {J.~J.}\ \bibnamefont {Ximenes}}, \bibinfo {author} {\bibfnamefont {M.~A.}\ \bibnamefont {Pires}},\ and\ \bibinfo {author} {\bibfnamefont {J.~M.}\ \bibnamefont {Villas-B{\^o}as}},\ }\href {https://doi.org/10.1103/r2xh-yyns} {\bibfield  {journal} {\bibinfo  {journal} {Physical Review A}\ }\textbf {\bibinfo {volume} {113}},\ \bibinfo {pages} {022410} (\bibinfo {year} {2026})}\BibitemShut {NoStop}%
\bibitem [{\citenamefont {Cordeiro}\ \emph {et~al.}(2026)\citenamefont {Cordeiro}, \citenamefont {Duzzioni},\ and\ \citenamefont {Amorim}}]{cordeiro_2026_defective}%
  \BibitemOpen
  \bibfield  {author} {\bibinfo {author} {\bibfnamefont {J.~V.}\ \bibnamefont {Cordeiro}}, \bibinfo {author} {\bibfnamefont {E.~I.}\ \bibnamefont {Duzzioni}},\ and\ \bibinfo {author} {\bibfnamefont {E.~P.~M.}\ \bibnamefont {Amorim}},\ }\href {https://doi.org/10.1103/p4pl-rvdh} {\bibfield  {journal} {\bibinfo  {journal} {Physical Review A}\ }\textbf {\bibinfo {volume} {114}},\ \bibinfo {pages} {012459} (\bibinfo {year} {2026})}\BibitemShut {NoStop}%
\bibitem [{\citenamefont {Ambainis}\ \emph {et~al.}(2001)\citenamefont {Ambainis}, \citenamefont {Bach}, \citenamefont {Nayak}, \citenamefont {Vishwanath},\ and\ \citenamefont {Watrous}}]{ambainis_2001_onedimensional}%
  \BibitemOpen
  \bibfield  {author} {\bibinfo {author} {\bibfnamefont {A.}~\bibnamefont {Ambainis}}, \bibinfo {author} {\bibfnamefont {E.}~\bibnamefont {Bach}}, \bibinfo {author} {\bibfnamefont {A.}~\bibnamefont {Nayak}}, \bibinfo {author} {\bibfnamefont {A.}~\bibnamefont {Vishwanath}},\ and\ \bibinfo {author} {\bibfnamefont {J.}~\bibnamefont {Watrous}},\ }in\ \href {https://doi.org/10.1145/380752.380757} {\emph {\bibinfo {booktitle} {Proceedings of the Thirty-Third Annual {{ACM}} Symposium on {{Theory}} of Computing}}}\ (\bibinfo  {publisher} {ACM},\ \bibinfo {address} {Hersonissos Greece},\ \bibinfo {year} {2001})\ pp.\ \bibinfo {pages} {37--49}\BibitemShut {NoStop}%
\bibitem [{\citenamefont {Konno}(2002)}]{konno_2002_quantum}%
  \BibitemOpen
  \bibfield  {author} {\bibinfo {author} {\bibfnamefont {N.}~\bibnamefont {Konno}},\ }\href {https://doi.org/10.1023/A:1023413713008} {\bibfield  {journal} {\bibinfo  {journal} {Quantum Information Processing}\ }\textbf {\bibinfo {volume} {1}},\ \bibinfo {pages} {345} (\bibinfo {year} {2002})}\BibitemShut {NoStop}%
\bibitem [{\citenamefont {Grimmett}\ \emph {et~al.}(2004)\citenamefont {Grimmett}, \citenamefont {Janson},\ and\ \citenamefont {Scudo}}]{grimmett_2004_weak}%
  \BibitemOpen
  \bibfield  {author} {\bibinfo {author} {\bibfnamefont {G.}~\bibnamefont {Grimmett}}, \bibinfo {author} {\bibfnamefont {S.}~\bibnamefont {Janson}},\ and\ \bibinfo {author} {\bibfnamefont {P.~F.}\ \bibnamefont {Scudo}},\ }\href {https://doi.org/10.1103/PhysRevE.69.026119} {\bibfield  {journal} {\bibinfo  {journal} {Physical Review E}\ }\textbf {\bibinfo {volume} {69}},\ \bibinfo {pages} {026119} (\bibinfo {year} {2004})}\BibitemShut {NoStop}%
\bibitem [{\citenamefont {Konno}(2005)}]{konno_2005_limit}%
  \BibitemOpen
  \bibfield  {author} {\bibinfo {author} {\bibfnamefont {N.}~\bibnamefont {Konno}},\ }\href {https://doi.org/10.2969/jmsj/1150287309} {\bibfield  {journal} {\bibinfo  {journal} {Journal of the Mathematical Society of Japan}\ }\textbf {\bibinfo {volume} {57}},\ \bibinfo {pages} {1179} (\bibinfo {year} {2005})}\BibitemShut {NoStop}%
\bibitem [{\citenamefont {Ahlbrecht}\ \emph {et~al.}(2011)\citenamefont {Ahlbrecht}, \citenamefont {Vogts}, \citenamefont {Werner},\ and\ \citenamefont {Werner}}]{ahlbrecht_2011_asymptotic}%
  \BibitemOpen
  \bibfield  {author} {\bibinfo {author} {\bibfnamefont {A.}~\bibnamefont {Ahlbrecht}}, \bibinfo {author} {\bibfnamefont {H.}~\bibnamefont {Vogts}}, \bibinfo {author} {\bibfnamefont {A.~H.}\ \bibnamefont {Werner}},\ and\ \bibinfo {author} {\bibfnamefont {R.~F.}\ \bibnamefont {Werner}},\ }\href {https://doi.org/10.1063/1.3575568} {\bibfield  {journal} {\bibinfo  {journal} {Journal of Mathematical Physics}\ }\textbf {\bibinfo {volume} {52}},\ \bibinfo {pages} {042201} (\bibinfo {year} {2011})}\BibitemShut {NoStop}%
\bibitem [{\citenamefont {Annabestani}(2020)}]{annabestani_2020_asymptotic}%
  \BibitemOpen
  \bibfield  {author} {\bibinfo {author} {\bibfnamefont {M.}~\bibnamefont {Annabestani}},\ }\href {https://doi.org/10.1088/1751-8121/ab7b20} {\bibfield  {journal} {\bibinfo  {journal} {Journal of Physics A: Mathematical and Theoretical}\ }\textbf {\bibinfo {volume} {53}},\ \bibinfo {pages} {155303} (\bibinfo {year} {2020})}\BibitemShut {NoStop}%
\bibitem [{\citenamefont {Portugal}(2018)}]{portugal_2018_quantum}%
  \BibitemOpen
  \bibfield  {author} {\bibinfo {author} {\bibfnamefont {R.}~\bibnamefont {Portugal}},\ }\href {https://doi.org/10.1007/978-3-319-97813-0} {\emph {\bibinfo {title} {Quantum {{Walks}} and {{Search Algorithms}}}}},\ Quantum {{Science}} and {{Technology}}\ (\bibinfo  {publisher} {Springer International Publishing},\ \bibinfo {address} {Cham},\ \bibinfo {year} {2018})\BibitemShut {NoStop}%
\bibitem [{\citenamefont {Nayak}\ and\ \citenamefont {Vishwanath}(2000)}]{nayak_2000_quantum}%
  \BibitemOpen
  \bibfield  {author} {\bibinfo {author} {\bibfnamefont {A.}~\bibnamefont {Nayak}}\ and\ \bibinfo {author} {\bibfnamefont {A.}~\bibnamefont {Vishwanath}},\ }\href {https://doi.org/10.48550/arXiv.quant-ph/0010117} {\bibinfo {title} {Quantum {{Walk}} on the {{Line}}}} (\bibinfo {year} {2000}),\ \Eprint {https://arxiv.org/abs/quant-ph/0010117} {arXiv:quant-ph/0010117} \BibitemShut {NoStop}%
\bibitem [{\citenamefont {Ambainis}(2003)}]{ambainis_2003_quantum}%
  \BibitemOpen
  \bibfield  {author} {\bibinfo {author} {\bibfnamefont {A.}~\bibnamefont {Ambainis}},\ }\href {https://doi.org/10.1142/S0219749903000383} {\bibfield  {journal} {\bibinfo  {journal} {International Journal of Quantum Information}\ }\textbf {\bibinfo {volume} {1}},\ \bibinfo {pages} {507} (\bibinfo {year} {2003})}\BibitemShut {NoStop}%
\bibitem [{SM()}]{SM}%
  \BibitemOpen
  \href@noop {} {}\bibinfo {note} {See Supplemental Material for additional derivations and technical details.}\BibitemShut {Stop}%
\bibitem [{Note1()}]{Note1}%
  \BibitemOpen
  \bibinfo {note} {As a finitely generated cone, $\protect \mathcal K$ is closed and convex. Since $\protect \vec {\protect \mathcal {T}}(U_{\protect \mathcal W})\protect \notin \protect \mathcal K$, the separating-hyperplane theorem~\cite {rockafellar_1970_convex} provides a vector $\protect \vec m$ such that $\protect \vec m\cdot \protect \vec {\protect \mathcal {T}}(U_{\protect \mathcal W})>0$ and $\protect \vec m\cdot \protect \vec v\leq 0$ for every $\protect \vec v\in \protect \mathcal K$. Because $(\protect \vec {\protect \mathcal {T}}_i)_z>0$ for every generator, setting $\protect \vec m_\epsilon =\protect \vec m-\epsilon \protect \hat z$ for sufficiently small $\epsilon >0$ gives $\protect \vec m_\epsilon \cdot \protect \vec {\protect \mathcal {T}}_i<0$ for every $i$ while preserving $\protect \vec m_\epsilon \cdot \protect \vec {\protect \mathcal {T}}(U_{\protect \mathcal W})>0$. Normalizing $\protect \vec m_\epsilon $ gives the unit normal $\protect \hat n$.}\BibitemShut {Stop}%
\bibitem [{\citenamefont {Albertini}\ and\ \citenamefont {D'Alessandro}(2012)}]{albertini_2012_controllability}%
  \BibitemOpen
  \bibfield  {author} {\bibinfo {author} {\bibfnamefont {F.}~\bibnamefont {Albertini}}\ and\ \bibinfo {author} {\bibfnamefont {D.}~\bibnamefont {D'Alessandro}},\ }\href {https://doi.org/10.1007/s00498-012-0084-0} {\bibfield  {journal} {\bibinfo  {journal} {Mathematics of Control, Signals, and Systems}\ }\textbf {\bibinfo {volume} {24}},\ \bibinfo {pages} {321} (\bibinfo {year} {2012})}\BibitemShut {NoStop}%
\bibitem [{\citenamefont {Rockafellar}(1970)}]{rockafellar_1970_convex}%
  \BibitemOpen
  \bibfield  {author} {\bibinfo {author} {\bibfnamefont {R.~T.}\ \bibnamefont {Rockafellar}},\ }\href {https://doi.org/10.1515/9781400873173} {\emph {\bibinfo {title} {Convex Analysis}}},\ \bibinfo {series} {Princeton Mathematical Series}\ No.~\bibinfo {number} {28}\ (\bibinfo  {publisher} {Princeton University Press},\ \bibinfo {address} {Princeton, NJ},\ \bibinfo {year} {1970})\BibitemShut {NoStop}%
\bibitem [{\citenamefont {Hantzko}\ and\ \citenamefont {Binkowski}(2026)}]{hantzko_2026_classification}%
  \BibitemOpen
  \bibfield  {author} {\bibinfo {author} {\bibfnamefont {L.}~\bibnamefont {Hantzko}}\ and\ \bibinfo {author} {\bibfnamefont {L.}~\bibnamefont {Binkowski}},\ }\href {https://doi.org/10.1140/epjd/s10053-026-01217-9} {\bibfield  {journal} {\bibinfo  {journal} {The European Physical Journal D}\ }\textbf {\bibinfo {volume} {80}},\ \bibinfo {pages} {112} (\bibinfo {year} {2026})}\BibitemShut {NoStop}%
\bibitem [{\citenamefont {Bengtsson}\ and\ \citenamefont {Zyczkowski}(2006)}]{bengtsson_2006_geometry}%
  \BibitemOpen
  \bibfield  {author} {\bibinfo {author} {\bibfnamefont {I.}~\bibnamefont {Bengtsson}}\ and\ \bibinfo {author} {\bibfnamefont {K.}~\bibnamefont {Zyczkowski}},\ }\href {https://doi.org/10.1017/CBO9780511535048} {\emph {\bibinfo {title} {Geometry of {{Quantum States}}: {{An Introduction}} to {{Quantum Entanglement}}}}}\ (\bibinfo  {publisher} {Cambridge University Press},\ \bibinfo {address} {Cambridge},\ \bibinfo {year} {2006})\BibitemShut {NoStop}%
\end{thebibliography}%

% !TeX root = main.tex
% !TeX spellcheck = en_US
\clearpage
\onecolumngrid
\begin{center} \textbf{\large Supplemental Material: \titlepaper}\\[.2cm] \end{center}
% Set stuff {{{*
% Reiniciar contadores
\setcounter{equation}{0}
\setcounter{figure}{0}
\setcounter{table}{0}
\setcounter{section}{0}
\setcounter{secnumdepth}{3}
\setcounter{page}{1}

% Redefinir la numeración para que incluya "S"
\renewcommand{\theequation}{S\arabic{equation}}
\renewcommand{\thefigure}{S\arabic{figure}}
\renewcommand{\thetable}{S\arabic{table}}
\renewcommand{\thesection}{S\arabic{section}}
\renewcommand{\thetheorem}{S\arabic{theorem}}
\renewcommand{\theproof}{S\arabic{proof}}

% Opcional: Redefinir el formato de las referencias si el SM tiene su propia bibliografía
% \renewcommand{\bibnumfmt}[1]{[S#1]}
% \renewcommand{\citenumfont}[1]{S#1}

% *}}}
\section{Asymptotic velocity in terms of the transport vector} % {{{*
\label{sec:asymptotic_velocity_derivation}

To derive the asymptotic velocity $\overline{v}$ of the walker in a one-dimensional discrete-time quantum walk (1D DTQW), let us first move to the quasi-momentum representation. Consider the unitary evolution operator $U$ acting on $\mcH=\mcH_{p}\otimes\mbC^d$, where $\mcH_{p}=\operatorname{span}\{\ket{x}\}_{x\in\mbZ}$. The lattice translation operator $D$ acts as $D(\ket{x}\otimes\ket{c})=\ket{x+1}\otimes\ket{c}$ for any internal state $\ket{c}\in\mbC^d$. If $U$ is translationally invariant, i.e., $[U,D]=0$, the Fourier transform block-diagonalizes it as
\begin{equation}
    U=\int_{-\pi}^{\pi}\frac{dk}{2\pi}\,
    \dyad{k}\otimes\tilde{U}(k),
\end{equation}
where $\ket{k}=\sum_{x\in\mbZ}e^{ikx}\ket{x}$ denotes the Fourier basis, with
$k\in[-\pi,\pi)$ the quasi-momentum~\cite{portugal_2018_quantum}. For each $k$,
$\tilde{U}(k)$ is a $d\times d$ unitary matrix acting on the internal space
$\mbC^d$. For the DTQWs considered in this work, the coin is two-dimensional,
$d=2$, and $\tilde{U}(k)$ is either of the form $\tilde{U}(k)=S(k)C_N\ldots
S(k)C_1$ or $\tilde{U}(k)=S(k)C_N\ldots C_1$, where $S(k)=e^{-ik\sigma_z}$ and
% OLD: the coin operators $C_j$ are $k$-independent\cpnote{ufff como por que? argumentar o ver si es obvio}. Since $\det S(k)=1$, in either
% OLD: case $\det\tilde{U}(k)=\prod_j\det C_j$, which is independent of $k$\cpnote{Para que nos sirve esto?}. The
the coin operators $C_j$ are $k$-independent.
Since $\det S(k)=1$, in either case $\det\tilde{U}(k)=\prod_j\det C_j$, which
is independent of $k$; we use this below to show that the global phase
$\alpha(k)$ is $k$-independent.
% \clnote{Answers the two \cpnote's that were
% here. Added the homogeneity motivation for $C_j$ being $k$-independent, and
% a forward-pointer explaining why $\det\tilde U(k)$ being $k$-independent
% matters (it's used to show $\partial_k\alpha(k)=0$, which is what lets
% Eq.~\eqref{eq:exact_general_position} omit an extra linear-in-$t$ term). Old
% text with the two open cpnotes kept as a comment above. I also added a
% sentence near Eq.~\eqref{eq:exact_general_position} making that connection
% explicit --- see the clnote there. Discuss with JA.} 
The matrix $\tilde{U}(k)\in\mathrm{U}(2)$ can be decomposed into a $\mathrm{U}(1)$
phase factor and an $\mathrm{SU}(2)$ rotation operator:
\begin{equation}\label{eq:Uk}
  \tilde{U}(k)
  =
  e^{-i\alpha(k)}
  \left[
    \cos\omega(k)\mathbb{I}
    -
    i\sin\omega(k)\hat{n}(k)\cdot\vec{\sigma}
  \right]
  =
  e^{-i\alpha(k)}
  e^{-i\omega(k)\hat{n}(k)\cdot\vec{\sigma}}.
\end{equation}
Here, $\alpha(k)$ is a global phase, 
while $2\omega(k)$ and $\hat{n}(k)$ are the rotation angle and
rotation axis of the $\mathrm{SU}(2)$ part of $\tilde U(k)$, respectively.
Throughout this derivation we assume $\sin\omega(k)\neq0$ for every
quasi-momentum $k\in[-\pi,\pi)$, so that $\hat{n}(k)$ and the eigenbasis
$\ket{\phi_\pm(k)}$ remain well defined; this excludes diagonal coin operators
($C_{01}=0$), whose eigenvalues become degenerate at isolated $k$ and which
produce non-generic ballistic-split dynamics (Sec.~\ref{chp:critical_angle_konno}).
The final closed-form transport vector remains finite and correct in this
case too, as confirmed independently via the residue-theorem method of
Sec.~\ref{sm:residue_method}, which does not require pointwise
smoothness of $\hat{n}(k)$. Since the $\mathrm{SU}(2)$ factor has unit determinant,
$\det\tilde{U}(k)=e^{-2i\alpha(k)}$, and therefore
\begin{equation}
\begin{split}
    \alpha(k)
    &=
    -\frac{1}{2}\arg\det\tilde{U}(k)
    \quad (\mathrm{mod}\ \pi) \\
    &=
    -\frac{1}{2}\arg\left(\prod_j\det C_j\right)
    \quad (\mathrm{mod}\ \pi).
\end{split}
\end{equation}
% OLD: Thus, for the DTQWs considered here, $\alpha(k)$ can be chosen \cpnote{porqeu
% OLD: chosen, no es una constante independiente de k?}as a $k$-independent constant
% OLD: $\alpha$, so that $\partial_k\alpha(k)=0$.
Thus, for the DTQWs considered here, since $\det\tilde{U}(k)$ is itself
independent of $k$, continuity in $k$ fixes the branch of the $(\mathrm{mod}\
\pi)$ ambiguity above, so $\alpha(k)$ is a genuine $k$-independent constant
$\alpha$, with $\partial_k\alpha(k)=0$.
% \clnote{Reworded per the ``porque
% chosen'' cpnote: ``can be chosen'' made it sound like a free choice at each
% $k$, when really $\det\tilde U(k)$'s $k$-independence pins the branch once
% and for all by continuity, making $\alpha$ a determined constant, not a
% choice. Old text kept as a comment above. Discuss with JA.}
% OLD: By the spectral theorem,
% OLD: $\tilde{U}(k)$ is diagonalized by an orthonormal eigenbasis
% OLD: $\ket{\phi_\pm(k)}$, with corresponding eigenvalues
% OLD: $e^{-i[\alpha\pm\omega(k)]}$\cpnote{Esto está como desconectado}.
To evolve an arbitrary coin state under $\tilde{U}(k)$, we diagonalize it: by
the spectral theorem, $\tilde{U}(k)$ admits an orthonormal eigenbasis
$\ket{\phi_\pm(k)}$, with eigenvalues $e^{-i[\alpha\pm\omega(k)]}$, which we
use below to expand the initial coin state.
% \clnote{Added a linking clause
% per the ``desconectado'' cpnote --- the spectral-theorem sentence previously
% appeared with no stated purpose; now it points forward to its use a few
% lines below, where $\ket{c_0}$ is expanded in this eigenbasis. Old text kept
% as a comment above. Discuss with JA.}

We consider a walker initially localized at the origin with coin state
$\ket{c_0}$. In position space, the initial state is
$\ket{\psi_0}=\ket{0}\otimes\ket{c_0}$.
% OLD: Since
% OLD: $\ket{0}=\int_{-\pi}^{\pi}\frac{dk}{2\pi}\ket{k}$ the initial state\cpnote{por?}, in the
% OLD: quasi-momentum representation, is $\ket{\tilde\psi_0(k)}=\ket{c_0}$, for every
% OLD: $k$.
Using $\ket{0}=\int_{-\pi}^{\pi}\frac{dk}{2\pi}\ket{k}$, this reads
$\ket{\psi_0}=\int_{-\pi}^{\pi}\frac{dk}{2\pi}\ket{k}\otimes\ket{c_0}$;
comparing term by term with the general quasi-momentum expansion
$\ket{\psi_0}=\int_{-\pi}^{\pi}\frac{dk}{2\pi}\ket{k}\otimes\ket{\tilde\psi_0(k)}$
gives $\ket{\tilde\psi_0(k)}=\ket{c_0}$ for every
$k$.
% \clnote{Spelled out the comparison step per the ``por?'' cpnote, instead
% of asserting $\ket{\tilde\psi_0(k)}=\ket{c_0}$ directly. Old text kept as a
% comment above. Discuss with JA.} 
For each quasi-momentum $k$, we expand the same initial coin state
$\ket{c_0}$ in the $k$-dependent eigenbasis of $\tilde U(k)$:
\begin{equation}
  \ket{c_0} = 
  c_+(k)\ket{\phi_+(k)} + 
  c_-(k)\ket{\phi_-(k)},
\end{equation}
where $c_\pm(k)= \braket{\phi_\pm(k)}{c_0}$. Although $\ket{c_0}$ is independent of $k$, its expansion coefficients depend on $k$ because the eigenbasis  $\qty{\ket{\phi_\pm(k)}}$ varies with quasi-momentum. The quasi-momentum component at an arbitrary discrete time step $t$ is obtained by applying $\tilde U(k)$ $t$ times:
\begin{equation}\label{eq:momentum_state_evolution}
    \ket{\tilde\psi_t(k)} =
    \qty[\tilde U(k)]^t \ket{\tilde\psi_0(k)} =
    c_+(k)e^{-i[\alpha+\omega(k)]t}\ket{\phi_+(k)} +
    c_-(k)e^{-i[\alpha-\omega(k)]t}\ket{\phi_-(k)}.
\end{equation}

In the quasi-momentum representation, the expectation value of the position operator at time $t$ is $\expval{x}_t = \int_{-\pi}^{\pi} \frac{dk}{2\pi} \bra{\tilde\psi_t(k)} i \partial_k \ket{\tilde\psi_t(k)}$. Evaluating this inner product explicitly yields three contributions:
\begin{align}\label{eq:exact_general_position}
    \expval{x}_t = \int_{-\pi}^{\pi} \frac{dk}{2\pi} \Bigg\{ 
    & t\,\partial_k\omega(k)
    \left(
    |c_+(k)|^2-|c_-(k)|^2
    \right) \nonumber \\
    & + \sum_{s=\pm} \left( i c_s^*(k) \partial_k c_s(k)- |c_s(k)|^2 \mathcal{A}_{ss}(k) \right) \nonumber \\
    & - e^{2i\omega(k)t} c_+^*(k) c_-(k)\mathcal{A}_{+-}(k) - e^{-2i\omega(k)t} c_-^*(k) c_+(k)\mathcal{A}_{-+}(k) \Bigg\},
\end{align}
where $\mathcal{A}_{rs}(k) = -i \mel{\phi_r(k)}{\partial_k}{\phi_s(k)}$. No
term proportional to $\partial_k\alpha(k)$ appears because $\alpha(k)$ was
shown above to be $k$-independent. The
first line of \Eref{eq:exact_general_position} grows linearly with time,
whereas the remaining terms contain no factor that grows with $t$. Therefore,
dividing by $t$ and taking the long-time limit leaves only the first
contribution:
% OLD: $\overline{v}=\lim_{t\to\infty}\expval{x}_t/t$\cpnote{esta ecuacion la dejaria inline en la ecuación que viene}, leaves only the first
% OLD: contribution:
% OLD: \begin{equation}\label{eq:vel_almost}
% OLD:     \overline{v}
% OLD:     =
% OLD:     \int_{-\pi}^{\pi}\frac{dk}{2\pi}
% OLD:     \partial_k\omega(k) \qty(\qty|c_+(k)|^2-\qty|c_-(k)|^2).
% OLD: \end{equation}
\begin{equation}\label{eq:vel_almost}
    \overline{v}
    =
    \lim_{t\to\infty}\frac{\expval{x}_t}{t}
    =
    \int_{-\pi}^{\pi}\frac{dk}{2\pi}
    \partial_k\omega(k) \qty(\qty|c_+(k)|^2-\qty|c_-(k)|^2).
\end{equation}
The population difference can be written as
\begin{equation}
    \qty|c_+(k)|^2-\qty|c_-(k)|^2
    =
    \ev{\qty(\dyad{\phi_+(k)}-\dyad{\phi_-(k)})}{c_0}
    =
    \bra{c_0}\hat{n}(k)\cdot\vec{\sigma}\ket{c_0}.
\end{equation}
Writing the density matrix of the initial coin state as
$\rho_0=\frac{1}{2}\left(\mathbb{I}+\vec{r}_0\cdot\vec{\sigma}\right)$,
this expectation value becomes
$\bra{c_0}\hat{n}(k)\cdot\vec{\sigma}\ket{c_0}
=\hat{n}(k)\cdot\vec{r}_0$.

Substituting this result into \Eref{eq:vel_almost} and using the fact that $\vec{r}_0$ is independent of the quasi-momentum $k$, the asymptotic velocity takes the form
\begin{equation}\label{eq:T_definition}
  \overline{v}
  =
  \vmcT\cdot\vec{r}_0,
  \qquad
  \vmcT
  =
  \int_{-\pi}^{\pi} \frac{dk}{2\pi} \pdv{\omega(k)}{k} \hat{n}(k).
\end{equation}
Thus, the dependence on the initial coin state is entirely contained in its
Bloch vector $\vec{r}_0$, while the transport vector $\vmcT$ depends only on
the evolution operator of the walk. If $\alpha(k)$ depended on $k$,
\Eref{eq:exact_general_position} would acquire an additional term
$t\,\partial_k\alpha(k)$, yielding
$\overline{v}=\int_{-\pi}^{\pi}\frac{dk}{2\pi}\partial_k\alpha(k)
+\vmcT\cdot\vec{r}_0$. This additional drift is independent of the initial
coin state and vanishes for the DTQWs considered here because
$\partial_k\alpha(k)=0$.
% *}}}
\section{Transport vector of a single-step DTQW} %{{{*
\label{sm:transport_vector_integral}

We first express the transport vector directly in terms of the evolution
operator for the two-state DTQWs considered here, whose determinant is
independent of $k$. We then specialize the result to the single-step walk
and evaluate the integral explicitly.
% $\tilde U(k)=S(k)C$, with $C\in\mathrm{SU}(2)$\janote{this can be omitted, and we can keep the appendix as general until the final part}, 
As shown in Sec.~\ref{sec:asymptotic_velocity_derivation}, the
$\mathrm U(1)$ phase is $k$-independent for the class of DTQWs considered
in this work, and can therefore be omitted. Henceforth,
% \codexnote{
% in this appendix
% }{
throughout the remainder of the Supplemental Material,
% },
$\tilde U(k)$ denotes the corresponding determinant-one representative.

After omitting the global phase, \Eref{eq:Uk} becomes
$\tilde U(k)=\cos\omega(k)\mathbb I
-i\sin\omega(k)\hat n(k)\cdot\vec\sigma$. Taking the trace and the three
traces $\Tr[\sigma_j\tilde U(k)]$, $j=x,y,z$, gives
$\Tr[\tilde U(k)]=2\cos\omega(k)$ and
$\Tr[\vec\sigma\tilde U(k)]=-2i\sin\omega(k)\hat n(k)$.
Differentiating the first identity gives
$\partial_k\omega(k)=-[2\sin\omega(k)]^{-1}
\partial_k\Tr[\tilde U(k)]$, whereas the second gives
$\hat n(k)=i[2\sin\omega(k)]^{-1}
\Tr[\vec\sigma\tilde U(k)]$. Substituting both relations into
\Eref{eq:T_definition} and using
$\sin^2\omega(k)=[4-\bqty{\Tr[\tilde U(k)]}^2]/4$ yields
\begin{equation}\protect\label{eq:vmcT_universal_app1}
\vmcT=i\int_{-\pi}^{\pi}\frac{dk}{2\pi}
\frac{\partial_k\Tr[\tilde U(k)]
\Tr[\vec\sigma\tilde U(k)]}
{\bqty{\Tr[\tilde U(k)]}^2-4}.
\end{equation}

% We now specialize this expression to $\tilde U(k)=S(k)C$, with
% $S(k)=e^{-ik\sigma_z}$ and $C\in\mathrm{SU}(2)$, and evaluate the
% remaining integral in terms of the matrix elements of $C$.
% 
We now specialize this expression to a single-step walk with an arbitrary
coin operator $C\in\mathrm{U}(2)$. Let $\widehat C\in\mathrm{SU}(2)$ denote
its determinant-one representative. Under the convention above,
$\tilde U(k)=S(k)\widehat C$, with $S(k)=e^{-ik\sigma_z}$.
We evaluate the remaining integral using the matrix elements of
$\widehat C$.
% 
% We parameterize the single coin operator $C\in\mathrm{SU}(2)$ by its
% upper-row matrix elements $C_{00}$ and $C_{01}$, with lower-row elements
% fixed by unitarity as $C_{11}=C_{00}^*$ and
% $C_{10}=-C_{01}^*$. 
% 
We parameterize $\widehat C$ by its upper-row matrix
elements $\widehat C_{00}$ and $\widehat C_{01}$, with lower-row elements
fixed by unitarity as $\widehat C_{11}=\widehat C_{00}^*$ and 
$\widehat C_{10}=-\widehat C_{01}^*$.
For the shift operator $S(k)=e^{-i\sigma_z k}$, the matrix multiplication
$\tilde U(k)=S(k)\widehat C$ yields:
\begin{equation}\label{eq:trace_identities_app1}
\begin{split}
\Tr[\tilde U(k)] &= e^{-ik} \widehat C_{00} + e^{ik} \widehat C_{00}^*, \\
\pdv{\Tr[\tilde U(k)]}{k} &= -i \bqty{ e^{-ik} \widehat C_{00} - e^{ik} \widehat C_{00}^* }, \\
\Tr[\vec{\sigma} \tilde U(k)] &= \mqty(
e^{-ik} \widehat C_{01} - e^{ik} \widehat C_{01}^* \\
i \pqty{ e^{-ik} \widehat C_{01} + e^{ik} \widehat C_{01}^* } \\
e^{-ik} \widehat C_{00} - e^{ik} \widehat C_{00}^*
).
\end{split}
\end{equation}
Substituting \Eref{eq:trace_identities_app1} into
\Eref{eq:vmcT_universal_app1}, and writing the diagonal coin-operator
element in polar form, we use
% \codexnote{
% $\widehat C_{00}=\abs{\widehat C_{00}}e^{i\alpha}$
% }{
$\widehat C_{00}=\abs{\widehat C_{00}}e^{i\beta}$.
% }.
We then factor $\abs{\widehat C_{00}}$ inside the vector using
% \codexnote{
% $\abs{\widehat C_{00}}=\widehat C_{00}^*e^{i\alpha}=\widehat C_{00}e^{-i\alpha}$
% }{
$\abs{\widehat C_{00}}=\widehat C_{00}^*e^{i\beta}=\widehat C_{00}e^{-i\beta}$, 
which aligns the phases across all components. Performing the variable
transformation
% \codexnote{
% $u=k-\alpha$
% }{
$u=k-\beta$,
the $2\pi$-periodicity of the integrand allows us to replace the shifted
integration bounds directly with the symmetric interval $[-\pi,\pi]$:
% \codexreason{The symbol $\alpha$ already denotes the global
% $\mathrm{U}(1)$ phase; $\beta$ avoids conflating it with the polar phase of
% $\widehat C_{00}$.}
\begin{equation}\label{eq:u_integral_app1}
\begin{split}
\vmcT &= \int_{-\pi}^{\pi} \frac{du}{2\pi} \frac{-2i \abs{\widehat C_{00}} \sin u}{\bqty{ 2 \abs{\widehat C_{00}} \cos u }^2 - 4} \frac{1}{\abs{\widehat C_{00} }} \mqty(
e^{-iu} \widehat C_{01} \widehat C_{00}^* - e^{iu} \widehat C_{01}^* \widehat C_{00} \\
i \pqty{ e^{-iu} \widehat C_{01} \widehat C_{00}^* + e^{iu} \widehat C_{01}^* \widehat C_{00} } \\
-2i \abs{\widehat C_{00}}^2 \sin u
) \\
&= \int_{-\pi}^{\pi} \frac{du}{2\pi} \frac{-2i \abs{\widehat C_{00}} \sin u}{\bqty{ 2 \abs{\widehat C_{00}} \cos u }^2 - 4} \frac{1}{\abs{\widehat C_{00} }} \mqty(
-2i \sin u \, \Re(\widehat C_{01} \widehat C_{00}^*) + 2i \cos u \, \Im(\widehat C_{01} \widehat C_{00}^*) \\
2i \sin u \, \Im(\widehat C_{01} \widehat C_{00}^*) + 2i \cos u \, \Re(\widehat C_{01} \widehat C_{00}^*) \\
-2i \abs{\widehat C_{00}}^2 \sin u
).
\end{split}
\end{equation}
The terms in the vector proportional to $\cos u$ vanish upon integration:
multiplication by the common scalar prefactor, which is odd in $u$, makes
their full integrands odd over the symmetric interval $[-\pi,\pi]$.
The remaining terms, proportional to $\sin u$, yield even integrands.
Factoring out their common $u$ dependence gives
\begin{equation}\label{eq:canonical_integral_app1}
    \vmcT = 
    \pqty{ 
        \int_{-\pi}^{\pi} 
        \frac{du}{2\pi} 
        \frac{\sin^2 u}{1 - \abs{\widehat C_{00}}^2 \cos^2 u}
    } \mqty(
        \Re(\widehat C_{01} \widehat C_{00}^*) \\
        -\Im(\widehat C_{01} \widehat C_{00}^*) \\
        \abs{\widehat C_{00}}^2
    ).
\end{equation}

To obtain the final analytical expression, we evaluate the scalar integral
appearing in parenthesis in \Eref{eq:canonical_integral_app1}. Defining this
integral as $\mathcal{I}_0$, direct integration followed by the unitarity
condition $\abs{\widehat C_{00}}^2+\abs{\widehat C_{01}}^2=1.$ gives
\begin{equation}\label{eq:scalar_integral_evaluated}
\mathcal{I}_0 = \frac{1 - \sqrt{1 - \abs{\widehat C_{00}}^2}}{\abs{\widehat C_{00}}^2} = \frac{1 - \abs{\widehat C_{01} }}{\abs{\widehat C_{00}}^2}.
\end{equation}
Multiplying the numerator and denominator of \Eref{eq:scalar_integral_evaluated} by the conjugate factor $\pqty{1 + \abs{\widehat C_{01} }}$ simplifies to
\begin{equation}\label{eq:scalar_integral_simplified}
\mathcal{I}_0 = \frac{1 - \abs{\widehat C_{01}}^2}{\abs{\widehat C_{00}}^2 \pqty{1 + \abs{\widehat C_{01} } }} = \frac{\abs{\widehat C_{00}}^2}{\abs{\widehat C_{00}}^2 \pqty{1 + \abs{\widehat C_{01}} }} = \frac{1}{1 + \abs{\widehat C_{01}} }.
\end{equation}
Substituting \Eref{eq:scalar_integral_simplified} back into \Eref{eq:canonical_integral_app1} provides the exact closed-form expression for the asymptotic transport vector in \Eref{eq:transport_vector_integrated} of the main text:
\begin{equation}\label{eq:transport_vector_integrated_SM}
  \vmcT = \frac{1}{1+\abs{\widehat C_{01} }}
  \begin{pmatrix}
    \Re(\widehat C_{01}\, \widehat C_{00}^*) \\
    -\Im(\widehat C_{01}\, \widehat C_{00}^*) \\
    \abs{\widehat C_{00}}^2
  \end{pmatrix}.
\end{equation}
Since this expression is invariant under a global phase of the coin
operator, it can equivalently be written in terms of the matrix elements
of the original $C\in\mathrm{U}(2)$, yielding 
Eq.~\eqref{eq:transport_vector_integrated} of the main text.  
Thus, without loss of generality, throughout the remainder of the 
Supplemental Material we denote by $C$ (and $C_j$ when several coin 
operators are involved) the corresponding $\mathrm{SU}(2)$ representative.
%*}}}
\section{General residue-theorem method for the transport vector} % {{{*
\label{sm:residue_method}
In Sec.~\ref{sm:transport_vector_integral}, we derived the general expression \Eref{eq:vmcT_universal_app1} for the transport vector solely in terms of the unitary step operator $\tilde U(k)$. There, we evaluated this expression for a DTQW with $\tilde U(k)=S(k)C$ by performing the momentum integral directly. Here, we present a residue-theorem method for evaluating the transport vector that can be applied to more general step operators $\tilde U(k)$. Although we illustrate the procedure explicitly for the single-step evolution $\tilde U(k)=S(k)C$, the method also applies to more general compositions, such as $\tilde U(k)=S(k)C_2S(k)C_1$, and, more generally, to evolutions involving multiple shift and coin operators. For sufficiently complicated compositions, an analytical evaluation of the resulting contour integral may become impractical, as it requires determining the zeros of high-degree polynomials. Nevertheless, the same procedure can still be used to evaluate the transport vector numerically.
For a single coin operator $C$ we find
% \begin{equation}
%     C = \begin{pmatrix}
%         C_{00} & C_{01}\\
%         C_{10} & C_{11}
%     \end{pmatrix}\, ,
% \end{equation}
\begin{equation}
    \tilde U(k) = S(k)\cdot C = \begin{pmatrix}
        e^{-ik}C_{00} &e^{-ik}C_{01} \\
        e^{ik}C_{10} & e^{ik}C_{11}
    \end{pmatrix}\, ,
\end{equation}
where $C_{ij}$ are the matrix elements of $C$, implying that we find the transport vector in \Eref{eq:vmcT_universal_app1} through the integral
\begin{equation}
    \vec{\mcT} = \int_{-\pi}^\pi \frac{dk}{2 \pi} \frac{1}{\left(C_{00}+C_{11} e^{2 i k}\right)^2-4 e^{2 i k}}
    \begin{pmatrix}
    \left(C_{00}-C_{11} e^{2 i k}\right) \left(C_{01}+C_{10} e^{2 i k}\right)\\
    i \left(C_{00}-C_{11} e^{2 i k}\right) \left(C_{01}-C_{10} e^{2 i k}\right)\\
    \left(C_{00}-C_{11} e^{2 i k}\right)^2
    \end{pmatrix} \, ,
\end{equation}
where for simplicity we will now show how to solve these integrals only on the $x$-component, but all other components can be derived in the same manner.

Our plan of action is to use the residue theorem in the complex plane, for which we want to use the unit circle as path, which we parametrize by
$\gamma(t) = e^{it}$, which implies that we can write the integral as
\begin{equation}
    \mcT_x = \int_{-\pi}^\pi \frac{dk}{2 \pi}
\frac{\left(C_{00}-C_{11} e^{2 i k}\right) \left(C_{01}+C_{10} e^{2 i k}\right)}{\left(C_{00}+C_{11} e^{2 i k}\right)^2-4 e^{2 i k}} = \oint_{\gamma} \frac{dz}{2 \pi} -\frac{i \left(C_{00}-C_{11} z ^2\right) \left(C_{01}+C_{10} z ^2\right)}{z  \left(\left(C_{00}+C_{11} z ^2\right)^2-4 z ^2\right)} \, ,
\label{eq: Tx: k integral to line integral}
\end{equation}
where the closed line integral is for $t\in [-\pi, \pi]$, i.e., anti-clockwise. The integrand of the line integral has poles at
\begin{align}
    z_0 = 0 &&
    z_{1,2} = \pm \frac{1 + \sqrt{1-C_{00} C_{11}}}{C_{11}} &&
    z_{3,4} = \pm \frac{1 - \sqrt{1-C_{00} C_{11}}}{C_{11}}
\end{align}
where trivially $z_0$ is inside the unit circle and to check the others we first use that $C_{11} = C_{00}^*$ and $0\leq|C_{00}|\leq 1$, thus we need to check
\begin{equation}
   |z_{1,2,3,4}| \leq 1\; \Leftrightarrow \; 1\pm\sqrt{1-|C_{00}|^2}\leq |C_{00}| \leq 1 \, ,
\end{equation}
which is never possible for $z_{1,2}$ as $1+\sqrt{1-|C_{00}|^2}\geq1$ and the inequality holds for $z_{3,4}$, as it is equivalent to $|C_{00}|-1\leq 0$. Therefore the relevant poles are $z_0$, $z_3$ and $z_4$ and as all poles are first order finding the residues is straightforward, thus $\mcT_x$ is given by
\begin{equation}
    \mcT_x = \oint_{\gamma} \frac{dz}{2\pi} f(z) =  i \left(\text{Res}_f(z_0)+\text{Res}_f(z_3)+\text{Res}_f(z_4) \right) = \frac{\left(\sqrt{1-C_{00} C_{11}}-1\right) (C_{00} C_{10}-C_{01} C_{11})}{2 C_{00} C_{11}} \, ,
\end{equation}
where $f(z) = -\frac{i \left(C_{00}-C_{11} z ^2\right) \left(C_{01}+C_{10} z ^2\right)}{z  \left(\left(C_{00}+C_{11} z ^2\right)^2-4 z ^2\right)}$. 
Using $1-|C_{00}|^2=|C_{01}|^2$ and $C_{10}=-C_{01}^*$ to simplify $f(z)$ gives
\begin{equation}
      \mcT_x = \frac{\Re(C_{00} C^*_{01})}{1+|C_{01}|} \, .
\end{equation}
The remaining components follow in the same manner, with the same poles
and only a different numerator, yielding
Eq.~\eqref{eq:transport_vector_integrated}. As shown in
Sec.~\ref{sm:transport_vector_integral}, this expression is invariant under
a global phase of the coin operator and therefore also holds for the original 
coin operator in $\mathrm{U}(2)$.

We further want to outline that these steps of first writing down the integral in terms of $e^{ik}$ and then going to the complex plane and integrating over the line of the unit circle, is a general recipe that works also for more complex setups, for example for combining two coin operators or unitary steps. Therefore we want to outline each step in the following. First, we obtain the integral directly from \Eref{eq:vmcT_universal_app1}. We then make the substitution $e^{ik}\rightarrow z$ and use $dk=dz/(iz)$, which transforms the momentum integral into a contour integral along the unit circle $\gamma$, as in \Eref{eq: Tx: k integral to line integral}. All that is left to do now is find the poles of the resulting integrand and check which ones lie inside the unit circle and calculate the residues of those, which is easy by factorizing the denominators. Therefore the only complicated step is finding the zeros of a higher-order polynomial, which is numerically easy.

% *}}}
\section{Collinearity of the two-step DTQW transport vector} % {{{*
\label{sm:colinearity}

We now consider a two-step DTQW governed by the composite evolution
operator $U_3(k)=S(k)C_2S(k)C_1$.
% \codexinsert{
% Consistently with the convention above, $C_1$, $C_2$, and $U_3(k)$ denote
% their determinant-one $\mathrm{SU}(2)$ representatives throughout this
% section.
% }
Our objective is to demonstrate that its asymptotic transport vector
$\vmcT_3$ is collinear with the single-step transport vector
$\vmcT$ derived in Appendix~\ref{sm:transport_vector_integral}. Invoking the
formula established in \Eref{eq:vmcT_universal_app1}, the two-step transport
vector reads:
\begin{equation}\label{eq:vmcT_universal_app2}
\vmcT_3 = i \int_{-\pi}^{\pi} \frac{dk}{2\pi} \frac{1}{\bqty{\Tr[U_3(k)]}^2 - 4} \pdv{\Tr[U_3(k)]}{k} \Tr[\vec{\sigma} U_3(k)].
\end{equation}

To distinguish between the two coin operators while preserving consistent matrix-element notation, we label them by $j\in\{1,2\}$ and denote the upper-row elements of $C_j$ by $C_j^{00}$ and $C_j^{01}$. In the two-step sequence, the shift operator $S(k) = e^{-i\sigma_z k}$ acts twice, generating phase evolutions governed by $2k$. Defining the constant, momentum-independent scalar $C_0 = C_2^{01} \pqty{C_1^{01}}^* + \pqty{C_2^{01}}^* C_1^{01} = 2\Re\bqty{ C_2^{01} \pqty{C_1^{01}}^* }$, matrix multiplication yields the following:
\begin{equation}\label{eq:trace_identities_app2}
\begin{split}
\Tr[U_3(k)] &= e^{-2ik} C_2^{00} C_1^{00} + e^{2ik} \pqty{C_2^{00}}^* \pqty{C_1^{00}}^* - C_0, \\
\pdv{\Tr[U_3(k)]}{k} &= -2i \bqty{ e^{-2ik} C_2^{00} C_1^{00} - e^{2ik} \pqty{C_2^{00}}^* \pqty{C_1^{00}}^* }, \\
\Tr[\vec{\sigma} U_3(k)] &= \mqty(
e^{-2ik} C_2^{00} C_1^{01} + C_2^{01} \pqty{C_1^{00}}^* - \pqty{C_2^{01}}^* C_1^{00} - e^{2ik} \pqty{C_2^{00}}^* \pqty{C_1^{01}}^* \\
-i \bqty{ e^{-2ik} C_2^{00} C_1^{01} + e^{2ik} \pqty{C_2^{00}}^* \pqty{C_1^{01}}^* + C_2^{01} \pqty{C_1^{00}}^* + \pqty{C_2^{01}}^* C_1^{00} } \\
e^{-2ik} C_2^{00} C_1^{00} - e^{2ik} \pqty{C_2^{00}}^* \pqty{C_1^{00}}^* - C_2^{01} \pqty{C_1^{01}}^* + \pqty{C_2^{01}}^* C_1^{01}
).
\end{split}
\end{equation}

Substituting \Eref{eq:trace_identities_app2} into \Eref{eq:vmcT_universal_app2} separates the integral naturally into two distinct contributions: a momentum-dependent vector, and a constant term containing the momentum-independent vector components proportional to $C_2^{01}$:
\begin{equation}\label{eq:split_integral_app2}
\begin{split}
\vmcT_3 &= \int_{-\pi}^{\pi} \frac{dk}{2\pi} \frac{2 \bqty{ e^{-2ik} C_2^{00} C_1^{00} - e^{2ik} \pqty{C_2^{00}}^* \pqty{C_1^{00}}^* }}{\bqty{ \Tr[U_3(k)] }^2 - 4} \mqty(
e^{-2ik} C_2^{00} C_1^{01} - e^{2ik} \pqty{C_2^{00}}^* \pqty{C_1^{01}}^* \\
-i \bqty{ e^{-2ik} C_2^{00} C_1^{01} + e^{2ik} \pqty{C_2^{00}}^* \pqty{C_1^{01}}^* } \\
e^{-2ik} C_2^{00} C_1^{00} - e^{2ik} \pqty{C_2^{00}}^* \pqty{C_1^{00}}^*
) \\
&\quad + \int_{-\pi}^{\pi} \frac{dk}{2\pi} \frac{2 \bqty{ e^{-2ik} C_2^{00} C_1^{00} - e^{2ik} \pqty{C_2^{00}}^* \pqty{C_1^{00}}^* }}{\bqty{ \Tr[U_3(k)] }^2 - 4} \mqty(
C_2^{01} \pqty{C_1^{00}}^* - \pqty{C_2^{01}}^* C_1^{00} \\
-i \bqty{ C_2^{01} \pqty{C_1^{00}}^* + \pqty{C_2^{01}}^* C_1^{00} } \\
- C_2^{01} \pqty{C_1^{01}}^* + \pqty{C_2^{01}}^* C_1^{01}
).
\end{split}
\end{equation}
Notice that the scalar numerator of the second integral is directly proportional to the exact differential of the trace [see \Eref{eq:trace_identities_app2}]: $2\bqty{e^{-2ik} C_2^{00} C_1^{00} - e^{2ik} \pqty{C_2^{00}}^* \pqty{C_1^{00}}^*} = i \, \partial{\Tr[U_3(k)]}/\partial{k}$. 
Since $U_3(k)\in\mathrm{SU}(2)$, its trace is real and can be written as
$\Tr[U_3(k)]=2\cos\omega(k)$. Thus, away from the isolated points where
$\Tr[U_3(k)]=\pm2$, the scalar integrand has the real antiderivative
\begin{equation}
\frac{i}{\Tr[U_3(k)]^2-4}
\pdv{\Tr[U_3(k)]}{k}
=
-i\pdv{}{k}
\left[
\frac{1}{2}
\operatorname{arctanh}
\left(
\frac{\Tr[U_3(k)]}{2}
\right)
\right].
\end{equation}
Since $U_3(-\pi)=U_3(\pi)$, see \Eref{eq:trace_identities_app2}, the integral 
therefore vanishes.

For the surviving term of the integral in \Eref{eq:split_integral_app2}, we express both diagonal coin-operator elements in polar form as $C_1^{00} = \abs{C_1^{00}} e^{i\alpha_1}$ and $C_2^{00} = \abs{C_2^{00}} e^{i\alpha_2}$. 
Factoring $\abs{C_2^{00}}$ and $\abs{C_1^{00}}$ out and applying 
the polar identities $\abs{C_1^{00}} = \pqty{C_1^{00}}^* e^{i\alpha_1} = C_1^{00} e^{-i\alpha_1}$ aligns the off-diagonal phases across all vector components, exactly as in Appendix~\ref{sm:transport_vector_integral}. We then apply the variable transformation $u = 2k - \alpha_2 - \alpha_1$. As $k$ spans $[-\pi, \pi]$, the angle $u$ covers exactly two periods of $2\pi$ of the integrand. This factor of $2$ arising from integrating over two periods cancels the factor coming from the change of variables, $dk = du/2$, yielding:
\begin{equation}\label{eq:u_integral_app2}
\begin{split}
\vmcT_3 &= -4i \abs{C_2^{00}}^2 \int_{-\pi}^{\pi} \frac{du}{2\pi} \frac{\sin u}{\bqty{ 2 \abs{C_2^{00}} \abs{C_1^{00}} \cos u - C_0 }^2 - 4} \mqty(
e^{-iu} C_1^{01} \pqty{C_1^{00}}^* - e^{iu} \pqty{C_1^{01}}^* C_1^{00} \\
i \bqty{ e^{-iu} C_1^{01} \pqty{C_1^{00}}^* + e^{iu} \pqty{C_1^{01}}^* C_1^{00} } \\
-2i \abs{C_1^{00}}^2 \sin u
) \\
&= -4i \abs{C_2^{00}}^2 \int_{-\pi}^{\pi} \frac{du}{2\pi} \frac{\sin u}{\bqty{ 2 \abs{C_2^{00}} \abs{C_1^{00}} \cos u - C_0 }^2 - 4} \mqty(
-2i \sin u \, \Re(C_1^{01} \pqty{C_1^{00}}^*) + 2i \cos u \, \Im(C_1^{01} \pqty{C_1^{00}}^*) \\
2i \cos u \, \Re(C_1^{01} \pqty{C_1^{00}}^*) + 2i \sin u \, \Im(C_1^{01} \pqty{C_1^{00}}^*) \\
-2i \abs{C_1^{00}}^2 \sin u
).
\end{split}
\end{equation}
Because the cosine is an even function, the transformed denominator is strictly even under parity ($u \to -u$). Multiplying by the odd numerator prefactor $\sin u$ renders the overall scalar function strictly odd. Invoking parity symmetry, as we did in Appendix~\ref{sm:transport_vector_integral}, the even $\cos u$ vector terms vanish identically upon integration over $[-\pi, \pi]$. Retaining only the odd $\sin u$ vector terms and factoring out $-2i \sin u$ produces an overall prefactor of $-8 \abs{C_2^{00}}^2$ and isolates the integral form for the two-step walk:
\begin{equation}\label{eq:canonical_integral_app2}
\vmcT_3 =  \bqty{ -8 \abs{C_2^{00}}^2 \int_{-\pi}^{\pi} \frac{du}{2\pi} \frac{\sin^2 u}{\bqty{ 2 \abs{C_2^{00}} \abs{C_1^{00}} \cos u - C_0 }^2 - 4} } \mqty(
\Re(C_1^{01} \pqty{C_1^{00}}^*) \\
-\Im(C_1^{01} \pqty{C_1^{00}}^*) \\
\abs{C_1^{00}}^2
).
\end{equation}

A direct comparison between \Eref{eq:canonical_integral_app2} and the single-step integral form established in \Eref{eq:canonical_integral_app1} of Appendix~\ref{sm:transport_vector_integral} reveals an exact correspondence. In both DTQWs, all coin-operator-dependent vector components factor completely outside the integral before scalar evaluation, converging onto the identical direction vector $\vec{u}_{C_1}$ fixed exclusively by the matrix elements of coin operator $C_1$:
\begin{equation}\label{eq:universal_vector_app2}
\vec{u}_{C_1} = \mqty(
\Re(C_1^{01} \pqty{C_1^{00}}^*) \\
-\Im(C_1^{01} \pqty{C_1^{00}}^*) \\
\abs{C_1^{00}}^2
).
\end{equation}
Consequently, the two-step transport vector $\vmcT_3$
is a nonnegative scalar multiple of
the single-step transport vector derived in
Appendix~\ref{sm:transport_vector_integral}, with proportionality coefficient
$\gamma$ given by the ratio of their respective scalar integrals:
\begin{equation}\label{eq:gamma_ratio_app2}
\gamma = 2 \abs{C_2^{00}}^2 \frac{\int_{-\pi}^{\pi} \frac{du}{2\pi} \frac{\sin^2 u}{\bqty{ 2 \abs{C_2^{00}} \abs{C_1^{00}} \cos u - C_0 }^2 - 4}}{\int_{-\pi}^{\pi} \frac{du}{2\pi} \frac{\sin^2 u}{\bqty{ 2 \abs{C_1^{00}} \cos u }^2 - 4}}.
\end{equation}
For the $\mathrm{SU}(2)$ part of either walk,
$\bqty{\Tr U(k)}^2-4\leq0$. Thus, both scalar integrals in
Eq.~\eqref{eq:gamma_ratio_app2} have the same nonpositive sign.
Since $2\abs{C_2^{00}}^2\geq0$, it follows that $\gamma\geq0$.
Degenerate endpoint cases follow by continuity.

% *}}}
\section{Reduced coin state dynamics}  % Reduced coin state dynamics {{{*
\label{sec:reduced_coin}

The primary objective of this section is to analyze the reduced dynamics of the internal coin space and establish an exact mathematical connection between its asymptotic stationary state and the transport vector $\vmcT$.

We begin from the momentum-space state at an arbitrary discrete time step
$t$, derived in \Eref{eq:momentum_state_evolution}. Dropping the
$k$-independent global phase $e^{-i\alpha t}$, which has no effect on the
reduced density matrix, the state vector reads
% \codexinsert{
% Here, $\hat C=e^{i\alpha}C$ denotes the determinant-one coin representative,
% so that $\tilde U(k)=S(k)\hat C$ under the convention above.
% }
\begin{equation}
    \ket{\tilde\psi_t(k)} = c_+(k)e^{-i\omega(k)t} \ket{\phi_+(k)} + c_-(k)e^{i\omega(k)t} \ket{\phi_-(k)},
    \label{eq:momentum_state}
\end{equation}
where $\pm\omega(k)$ and $\ket{\phi_\pm(k)}$ denote the quasienergies and
eigenstates of
% \codexnote{
$\tilde U(k)=S(k)C$,
% }{
% $\tilde U(k)=S(k)\hat C$
respectively, and
$c_\pm(k)=\braket{\phi_\pm(k)}{\tilde\psi_0}$ are the amplitudes of the
initial state onto the eigenstates of $\tilde U(k)$.
To obtain the reduced density matrix of the coin, $\rho_c(t)$, we trace out the lattice space. In the momentum representation, this partial trace corresponds to integrating over $k \in [-\pi, \pi)$:
\begin{equation}
    \rho_c(t) = \int_{-\pi}^{\pi} \frac{dk}{2\pi} \dyad{\tilde\psi_t(k)}.
    \label{eq:reduced_trace}
\end{equation}
Expanding the outer product using Eq.~\eqref{eq:momentum_state} naturally separates the reduced density matrix into two distinct components: a time-independent stationary contribution and a transient interference term,
\begin{equation}
    \rho_c(t) = \rho_{\text{stat}} + \rho_{\text{int}}(t),
    \label{eq:reduced_coin_split}
\end{equation}
where explicitly:
\begin{align}
    \rho_{\text{stat}} &= \int_{-\pi}^{\pi} \frac{dk}{2\pi} \pqty{ \abs{c_+(k)}^2 \dyad{\phi_+(k)} + \abs{c_-(k)}^2 \dyad{\phi_-(k)} }, \label{eq:stationary_state} \\
    \rho_{\text{int}}(t) &= \int_{-\pi}^{\pi} \frac{dk}{2\pi} \pqty{ c_+(k)c^*_-(k) e^{-2i\omega(k)t} \dyad{\phi_+(k)}{\phi_-(k)} + \text{h.c.} }. \label{eq:transient_state}
\end{align}
While $\rho_{\text{int}}(t)$ dictates the short-time oscillatory behavior, it vanishes in the asymptotic limit $t \to \infty$. Each matrix element of $\rho_{\mathrm{int}}(t)$ is an integral of the form $\int dkf(k)e^{\pm 2i\omega(k)t}$, where $f(k)$ is time independent. As $t$ increases, a small change in $k$ produces a large change in the phase $2\omega(k)t$, while the prefactor $f(k)$ changes very little over the same range of $k$. Thus, over short momentum intervals, the integral effectively adds almost equal amplitudes with different complex phases, which cancel each other in the long-time limit.

To evaluate the steady state explicitly, we express the spectral projection operators and the projection coefficients in their Bloch form on the Bloch sphere:
\begin{align}
    \dyad{\phi_\pm(k)} &= \frac{1}{2} \pqty{ \mathbb{I} \pm \hat{n}(k) \cdot \vec{\sigma} }, \label{eq:bloch_projectors} \\
    \abs{c_\pm(k)}^2 &= \frac{1}{2} \pqty{ 1 \pm \vec{r}_0 \cdot \hat{n}(k) }, \label{eq:bloch_populations}
\end{align}
where $\hat{n}(k)$ is the real unit spectral vector defining the rotation axis at momentum $k$, $\vec{\sigma}$ is the vector of Pauli matrices, and $\vec{r}_0$ is the initial Bloch vector of the coin. Substituting Eqs.~\eqref{eq:bloch_projectors} and \eqref{eq:bloch_populations} into Eq.~\eqref{eq:stationary_state}, the cross terms cancel symmetrically.
This yields an integral form for the steady state expressed entirely in terms
of the components of $\hat{n}(k)$:
\begin{equation}
\begin{split}
    \rho_{\text{stat}}
    &= \frac{1}{2}\mathbb{I}
    + \frac{1}{2}
    \sum_{i,j=x,y,z}
    r_{0,j}
    \pqty{
    \int_{-\pi}^{\pi}\frac{dk}{2\pi}
    n_i(k)n_j(k)
    }
    \sigma_i \\
    &= \frac{1}{2}\mathbb{I}
    + \frac{1}{2}
    \sum_{i,j=x,y,z}
    \mathcal{M}_{ij} r_{0,j}\sigma_i .
\end{split}
\label{eq:stationary_state_expanded}
\end{equation}
Here $\mathcal{M}_{ij} = \int_{-\pi}^{\pi}\frac{dk}{2\pi}n_i(k)n_j(k)$ are the matrix elements of the real symmetric matrix $\mathcal{M}=\int_{-\pi}^{\pi}\frac{dk}{2\pi}\hat{n}(k)\hat{n}^{T}(k)$. Thus, the stationary Bloch vector of the coin is obtained by applying $\mathcal{M}$ to the initial Bloch vector $\vec{r}_0$.

To establish the connection of $\mathcal M$ with the transport vector
$\vmcT$, we consider the derivative of
$\cos\omega(k)=\frac{1}{2}\Tr\bqty{\tilde U(k)}$ with respect to
quasi-momentum:
\begin{equation}
    -\sin\omega(k) \partial_k \omega(k)
    =
    \frac{1}{2}\Tr\bqty{\partial_k\pqty{e^{-ik\sigma_z}C}}
    =
    -\frac{i}{2}\Tr\bqty{\sigma_z \tilde{U}(k)}.
    \label{eq:trace_derivative}
\end{equation}
Expanding the evolution operator as
$\tilde U(k)=\cos\omega(k)\mathbb I
-i\sin\omega(k)\hat n(k)\cdot\vec\sigma$, we obtain
\begin{equation}
    -\frac{i}{2}
    \Tr\bqty{
    \sigma_z
    \pqty{
    \cos\omega(k)\mathbb{I}
    -i\sin\omega(k)\hat{n}(k)\cdot\vec{\sigma}
    }
    }
    =
    -\sin\omega(k)n_z(k).
    \label{eq:trace_evaluation}
\end{equation}
Equating Eqs.~\eqref{eq:trace_derivative} and
\eqref{eq:trace_evaluation} gives
\begin{equation}
    \partial_k\omega(k)=n_z(k).
    \label{eq:group_velocity_identity}
\end{equation}
Substituting this identity into the definition of the transport vector in
\Eref{eq:T_definition}, its $i$-th component becomes
\begin{equation}
    \vmcT_i
    =
    \int_{-\pi}^{\pi}\frac{dk}{2\pi}
    n_z(k)n_i(k)
    =
    \mathcal{M}_{iz}.
    \label{eq:transport_tensor_connection}
\end{equation}
Equation~\eqref{eq:stationary_state_expanded} shows that $\mathcal{M}$
maps the initial Bloch vector to the stationary Bloch vector of the coin,
whereas Eq.~\eqref{eq:transport_tensor_connection} shows that the transport
vector is the $z$ column of the same matrix.  Since $\mathcal{M}$ is symmetric,
the $z$ component of the stationary Bloch vector satisfies
\begin{equation*}
    (\vec{r}_{\mathrm{stat}})_z
    =
    \sum_{j=x,y,z}\mathcal{M}_{zj}r_{0,j}
    =
    \sum_{j=x,y,z}\mathcal{M}_{jz}r_{0,j}
    =
    \vmcT\cdot\vec{r}_0
    =
    \overline{v}.
\end{equation*}
Therefore, the coin's steady state directly encodes the walker's asymptotic
directed spatial transport through the $z$ component of its stationary Bloch
vector.

% *}}}
\section{Derivation of critical rotation angle} % {{{*
\label{SM:critical_rot_angle}
% Intro   % {{{*
Consider a strategy given by a fixed initial coin state, whose Bloch vector $\vec{r}_0(\theta, \phi)$ is parametrized by its polar and azimuthal angles $\theta$ and $\phi$, alongside a coin operator $C=\exp\left[-i (\chi/2) \hat n \cdot \vec \sigma\right] =C(\hat n, \chi)$ with a fixed rotation axis $\hat{n}(\vartheta, \varphi)$ parametrized by its polar and azimuthal angles $\vartheta$ and $\varphi$. For a generic initial state, the rotation angles $\chi$ for which the position expectation value vanishes may depend on the time $t$. However, when the initial state lies on the equator of the Bloch sphere, there exist specific rotation angles $\chi$ for which the expectation value of the position vanishes for all times $t\geq 0$. We refer to the unique non-trivial (non-zero) rotation angle that satisfies this condition for a given configuration as the critical rotation angle, $\chi_\mathrm{c}$. In this supplemental material, we derive the mathematical expression for $\chi_\mathrm{c}$ for any fixed configuration $\{\vartheta, \varphi, \theta=\pi/2, \phi\}$ through two different approaches. In Sec.~\ref{chp:critical_angle_transport_vector}, we find the rotation angles for which the expectation value of the position vanishes in the asymptotic limit using the transport vector formalism. In Sec.~\ref{chp:critical_angle_konno}, we find the rotation angles for which the expectation value of the position vanishes for all times $t \geq 0$ using a theorem that dictates the necessary and sufficient conditions required for the probability distribution to remain symmetric at any time $t$. Both approaches independently yield, as expected, the same mathematical expression for the critical angle.
% *}}}
\subsection{Critical rotation angle by setting the asymptotic velocity to zero}\label{chp:critical_angle_transport_vector}  % {{{*
We now derive the critical rotation angle $\chi_\mathrm{c}$ by setting the asymptotic velocity to zero, $\overline{v} = \vmcT \cdot \vec{r}_0 = 0$, and considering an initial coin state whose Bloch vector is constrained to the $xy$-plane of the Bloch sphere, $\vec{r}_0(\theta = \pi/2, \phi)$, for the walker to have an equal probability of stepping left or right initially.

For a coin operator $C=\exp[-i(\chi/2)\hat n\cdot\vec\sigma]$ with a fixed rotation axis $\hat n(\vartheta,\varphi)$, the transport vector [Eq.~\eqref{eq:transport_vector_integrated_SM}] can be rewritten in terms of its matrix elements, $C_{00}=\cos(\chi/2)-i\cos\vartheta\sin(\chi/2)$ and $C_{01}=-ie^{-i\varphi}\sin\vartheta\sin(\chi/2)$, as follows:
\begin{equation}\label{eq:transp_vec}
    \vmcT = \frac{1}{1+\sin\vartheta \sin(\chi/2)}
    \begin{pmatrix}
        \sin\vartheta \sin(\chi/2) \big[\cos\vartheta \sin(\chi/2)\cos\varphi - \cos(\chi/2)\sin\varphi\big] \\[0.25em]
        \sin\vartheta \sin(\chi/2) \big[\cos(\chi/2)\cos\varphi + \cos\vartheta \sin(\chi/2)\sin\varphi\big] \\[0.25em]
        1 - \sin^2\vartheta \sin^2(\chi/2)
    \end{pmatrix}.
\end{equation}
Using this expression and applying trigonometric angle addition identities, the asymptotic velocity for an initial coin state on the equator of the Bloch sphere ($\theta = \pi/2$) is given by
\begin{equation}\label{eq:asymp_vel_xyplane}
    \overline{v} = \frac{\sin\vartheta \sin(\chi/2)}{1+\sin\vartheta \sin(\chi/2)} \big[\sin(\phi - \varphi) \cos(\chi/2) + \cos(\phi - \varphi) \cos\vartheta \sin(\chi/2)\big].
\end{equation}
Because the factor $1/\big[1+\sin\vartheta \sin(\chi/2)\big]$ is strictly non-zero, the velocity vanishes only when the remaining factors equal zero.  Then, we can obtain the critical rotation angle $\chi_\mathrm{c}$ by finding the unique nontrivial solution ($\chi \not= 0$) to
\begin{equation}\label{eq:critical_angle}
    \sin\vartheta \sin(\chi/2)\big[\sin(\phi - \varphi) \cos(\chi/2) + \cos(\phi - \varphi) \cos\vartheta \sin(\chi/2)\big] = 0.
\end{equation}
To guarantee a unique non-zero solution to the previous equation, two conditions must be met: (1) the coin operator's rotation axis must not align with the $z$-axis ($\sin\vartheta \neq 0$), and (2) its projection onto the $xy$-plane must not be parallel or antiparallel to the initial Bloch vector [$\sin(\phi - \varphi) \neq 0$]. Under these constraints, solving Eq.~\eqref{eq:critical_angle} yields the critical rotation angle:
\begin{equation}
    \chi_\mathrm{c} = \begin{cases}
        \pi &\text{if $\cos\vartheta = 0$ or $\cos(\phi - \varphi) = 0$}\\
        2\arctan\left[-\dfrac{\tan(\phi - \varphi)}{\cos\vartheta}\right] &\text{else}
    \end{cases}.
\label{eq:critical_angle_general}
\end{equation}

\subsection{Critical rotation angle by imposing a symmetrical distribution}\label{chp:critical_angle_konno}  % {{{*
As proved by Konno~\cite{konno_2005_limit}, the position expectation value of a 1D DTQW vanishes at all times $t \geq 0$ if and only if its probability distribution remains symmetric at all times $t \geq 0$. Relying on this equivalence, we now derive the critical rotation angle $\chi_\mathrm{c}$ by invoking the following theorem.
\begin{theorem}\label{th:SymmetricProbDistrib}
A 1D discrete-time quantum walk described by a two-dimensional coin operator $C$ and an initial coin state $\ket{c_0} = \cos(\theta/2) \ket{0} + e^{i\phi}\sin(\theta/2) \ket{1}$ has a symmetric probability distribution at all times $t$ [i.e., $P(x, t) = P(-x, t)$ for all $x$ and $t>0$] if and only if:
\begin{enumerate}
    \item $\theta = \pi/2$, and
    \item $\Re\!\Big[C_{00}^*C_{01}e^{i\phi}\Big] = 0$.
\end{enumerate}
\end{theorem}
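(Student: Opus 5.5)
The plan is to prove the two directions separately. Sufficiency will come from an explicit reflection symmetry of the walk. Necessity will come from the exact distributions at the first few time steps, where the walk can be computed by hand. Throughout, $C$ is replaced by its $\mathrm{SU}(2)$ representative. This is allowed because a global phase does not change $P(x,t)$. We then write $C_{10}=-C_{01}^*$, $C_{11}=C_{00}^*$, $a=\cos(\theta/2)$ and $b=e^{i\phi}\sin(\theta/2)$.

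\emph{Sufficiency.} We look for an antiunitary operator $\mathcal R=(\Pi\otimes W)K$. Here $K$ is complex conjugation in the position basis, $\Pi\ket{x}=\ket{-x}$, and $W=\sigma_x\,\mathrm{diag}(1,e^{i\xi})$ is a coin unitary that exchanges $\ket0$ and $\ket1$.

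Because $W$ swaps the two coin states while $\Pi$ reverses the lattice, $\mathcal R S\mathcal R^{-1}=S$ holds. It therefore suffices to choose $\xi$ so that two conditions hold:
\begin{equation}
W C^* W^\dagger=e^{i\eta}C, \qquad W\ket{c_0}^*\propto\ket{c_0}.
\end{equation}
If both hold, $\mathcal R U\mathcal R^{-1}=e^{i\eta}U$ and $\mathcal R\ket{\psi_0}\propto\ket{\psi_0}$. Then $\mathcal R\ket{\psi_t}\propto\ket{\psi_t}$, and since antiunitaries preserve moduli, $P(-x,t)=P(x,t)$.

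The second condition forces $|a|=|b|$, i.e.\ $\theta=\pi/2$, and it fixes $e^{i\xi}$ in terms of $e^{i\phi}$. Substituting this $\xi$ into the first condition, written entrywise with the $\mathrm{SU}(2)$ parametrization, should reduce it to the requirement that $C_{00}^*C_{01}e^{i\phi}$ be purely imaginary. That is exactly condition~2. This reduction is routine algebra, and I would check it by direct $2\times2$ multiplication.

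\emph{Necessity.} At $t=1$ we have
\begin{equation}
P(\pm1,1)=|C_{00}a+C_{01}b|^2,\ |C_{10}a+C_{11}b|^2 .
\end{equation}
Their difference is
\begin{equation}
(|C_{00}|^2-|C_{01}|^2)(|a|^2-|b|^2)+4\,\Re(C_{00}C_{01}^*a b^*),
\end{equation}
which gives one real equation. At $t=2$, matching $P(2,2)=P(-2,2)$ gives a second, independent equation; the contribution at $x=0$ is automatically consistent once $P(2,2)=P(-2,2)$ holds, by normalization. The intended argument is to eliminate between these two equations and obtain $\cos\theta\,(\cdots)=0$ together with condition~2.

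As an alternative to the $t=2$ computation, the result $\overline v=\vmcT\cdot\vec r_0=0$ from Eq.~\eqref{eq:transport_vector_integrated} supplies a second linear relation in $\vec r_0$. Its $z$-component is $|C_{00}|^2/(1+|C_{01}|)>0$, which should separate the $\cos\theta$ term from the equatorial term.

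\emph{Main obstacle.} The hardest part is the necessity direction in degenerate cases. When $C_{00}=0$ or $C_{01}=0$, the walk is trivially symmetric or a pure translation, and the low-$t$ equations can fail to force $\theta=\pi/2$. I expect these cases to be the implicit exclusions in Konno's theorem~\cite{konno_2005_limit}. For them I would first try to state the exclusion explicitly, and otherwise appeal to Konno's path-counting formula for the amplitudes at general $t$.
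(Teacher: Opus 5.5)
\textbf{Sufficiency.} Your antiunitary ansatz fails for every coin with $C_{00}C_{01}\neq0$. At $\theta=\pi/2$ you get $W\ket{c_0}^*=\tfrac{1}{\sqrt2}\bigl(e^{i(\xi-\phi)}\ket0+\ket1\bigr)$. This is proportional to $\ket{c_0}$ only when $e^{i\xi}=1$, whatever $\phi$ is, so $W=\sigma_x$ is forced. But $\sigma_xC^*\sigma_x=\begin{pmatrix}C_{00}&-C_{01}\\ C_{01}^*&C_{00}^*\end{pmatrix}$, and matching its first row with that of $e^{i\eta}C$ requires $e^{i\eta}=1$ and $e^{i\eta}=-1$ at once. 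Dressing $\mathcal R$ with a position phase $e^{i\delta\hat x}$ does not help. So even the textbook symmetric case, the Hadamard walk from $(\ket0+i\ket1)/\sqrt2$, admits no such $\mathcal R$, and the ``routine algebra'' cannot return condition~2.

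What works is a \emph{unitary} reflection. Write $C_{00}=\abs{C_{00}}e^{i\beta}$ and $C_{01}=\abs{C_{01}}e^{i\kappa}$, and set $\mathcal V=e^{2i\beta\hat x}(\Pi\otimes W)$ with $W=\begin{pmatrix}0&1\\ -e^{-2i(\kappa-\beta)}&0\end{pmatrix}$. Then $WCW^\dagger=e^{-2i\beta\sigma_z}C$ and $e^{2i\beta\hat x}Se^{-2i\beta\hat x}=S\,e^{2i\beta\sigma_z}$, so $\mathcal VU\mathcal V^{-1}=U$. Moreover, $W\ket{c_0}\propto\ket{c_0}$ holds iff $\theta=\pi/2$ and $e^{2i(\phi+\kappa-\beta)}=-1$, which is exactly $\Re(C_{00}^*C_{01}e^{i\phi})=0$. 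The position phase does not change $P(x,t)$. It is needed unless $C_{00}$ is real or purely imaginary.

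\textbf{Necessity.} The equation $P(2,2)=P(-2,2)$ is not independent of the $t=1$ equation. The only path to $x=\pm t$ is straight, so $P(\pm t,t)=\abs{C_{00}}^{2(t-1)}P(\pm1,1)$, and $x=0$ is its own mirror image. Hence the data up to $t=2$ give only the single equation $2R+(\abs{C_{00}}^2-\abs{C_{01}}^2)\cos\theta=0$, with $R=\sin\theta\,\Re(C_{00}^*C_{01}e^{i\phi})$. One equation cannot force both conditions.

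Your fallback does close this gap, and it should be the main argument. Symmetry at all times gives $\langle x\rangle_t=0$, hence $\overline v=\vmcT\cdot\vec r_0=0$, i.e.\ $R+\abs{C_{00}}^2\cos\theta=0$. The resulting linear system in $(R,\cos\theta)$ has determinant $2\abs{C_{00}}^2-(\abs{C_{00}}^2-\abs{C_{01}}^2)=1$, so $\cos\theta=0$ and $R=0$. The price is that the paper's second derivation of $\chi_{\mathrm c}$ then no longer stands independently of the transport-vector one.

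Your ``main obstacle'' is also misplaced. If $C_{00}C_{01}=0$, then $R=0$ and the $t=1$ equation alone reads $\pm\cos\theta=0$. The anti-diagonal walk is not trivially symmetric: at odd times it sits on $x=\pm1$ with weights $\sin^2(\theta/2)$ and $\cos^2(\theta/2)$. Excluding these cases would prove less than the statement. Covering them is precisely what the paper's proof adds: it cites Konno for $C_{00}C_{01}\neq0$ and computes the diagonal and anti-diagonal families directly. With the two fixes above, your route would give a self-contained proof that needs no appeal to Konno.
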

Theorem~\ref{th:SymmetricProbDistrib} is a slightly modified and generalized version of a theorem first provided and proved by Konno~\cite{konno_2005_limit}, and more clearly presented by Hantzko et al. in Ref.~\cite{hantzko_2026_classification}. The original statement assumes $C_{00}$ and $C_{01}$ to be non-zero. Therefore, to extend it to arbitrary coin operators, it is enough to consider the two excluded cases: diagonal and anti-diagonal coin operators.
\begin{proof}
When $C_{00}$ and $C_{01}$ are non-zero, the result follows directly from Ref.~\cite{konno_2005_limit}. If the coin operator is diagonal, $C_{01}=0$, the second condition is trivially satisfied. The evolution produces a ballistic split,
$\ket{\psi_t}=\cos(\theta/2)C_{00}^t\ket{t,0}
+e^{i\phi}\sin(\theta/2)(C_{00}^*)^t\ket{-t,1}$,
so the probabilities at $x=t$ and $x=-t$ are $|\cos(\theta/2)|^2$ and $|\sin(\theta/2)|^2$, respectively. Hence, the distribution is symmetric for all $t$ if and only if $\theta=\pi/2$.

If the coin operator is anti-diagonal, $C_{00}=0$, the second condition is again trivially satisfied. At even times the walker is localized at the origin, so the distribution is symmetric for any initial coin state. At odd times, the walker is split between $x=1$ and $x=-1$ with probabilities $|\sin(\theta/2)|^2$ and $|\cos(\theta/2)|^2$, respectively. Thus, the distribution is symmetric for all $t$ if and only if $\theta=\pi/2$. This proves that the theorem also holds for diagonal and anti-diagonal coin operators.
\end{proof}

We now apply Theorem~\ref{th:SymmetricProbDistrib} to the coin operator $C=\exp[-i(\chi/2)\hat n\cdot\vec\sigma]$, with fixed rotation axis $\hat n(\vartheta,\varphi)$. Its relevant matrix elements are $C_{00}=\cos(\chi/2)-i\cos\vartheta\sin(\chi/2)$ and $C_{01}=-ie^{-i\varphi}\sin\vartheta\sin(\chi/2)$. Condition (1) requires the initial coin state to lie on the equator of the Bloch sphere, $\theta=\pi/2$, while condition (2) gives
\[
    \sin\vartheta\sin(\chi/2)
    \left[
        \sin(\phi-\varphi)\cos(\chi/2)
        +
        \cos(\phi-\varphi)\cos\vartheta\sin(\chi/2)
    \right]
    =
    0,
\]
which is exactly Eq.~\eqref{eq:critical_angle} derived in Sec.~\ref{chp:critical_angle_transport_vector}. Therefore, for a fixed configuration $\{\vartheta,\varphi,\theta=\pi/2,\phi\}$, the critical rotation angle $\chi_\mathrm{c}$ is the unique non-trivial solution to Eq.~\eqref{eq:critical_angle}. Hence, the  critical angle obtained from imposing a symmetric probability distribution, and therefore a vanishing expectation value of the position, for all times $t\geq 0$ is equivalent to the expression obtained by setting the asymptotic velocity to zero for initial coin states on the $xy$-plane.

% *}}}
% *}}}
\section{Detailed calculation of $\mathbb{P}(L\circ L = W) + \mathbb{P}(W\circ W = L)$} % {{{*
\label{sec:joint_prob_derivation}
In this section, we derive 
the probability for Parrondo's paradox to emerge, that is, 
$\mathbb{P}(L\circ L = W) + \mathbb{P}(W\circ W = L)$,
in the case of coin composition. We assume the constituent coin operators 
and the initial state of the coin are drawn uniformly with respect to the 
Haar measure. We first compute the losing-to-winning contribution,
$\mathbb{P}(L\circ L = W) \equiv \mathbb{P}\qty[\vmcT(SC_1) \cdot \vec{r}_0 < 0, \; \vmcT(SC_2) \cdot \vec{r}_0 < 0, \; \vmcT(SC_2C_1) \cdot \vec{r}_0 > 0]$. Here, $\vmcT(SC_1)$ and $\vmcT(SC_2)$ are the transport vectors associated with independent Haar-distributed SU(2) coin operators $C_1$ and $C_2$, $\vmcT(SC_2C_1)$ corresponds to the composite coin operator, and $\vec{r}_0$ is the Bloch vector of an initial state of the coin, uniformly sampled over the Bloch sphere.

We first evaluate the marginal probability of a single losing strategy, $\mathbb{P}(\vmcT(SC) \cdot \vec{r}_0 < 0)$.
Let \(C\in \mathrm{SU}(2)\), and denote its first row by the complex pair
$(C_{00},C_{01})$. Unitarity imposes
$|C_{00}|^2+|C_{01}|^2=1$, while the condition $\det C=1$
fixes the second row as $(-C_{01}^*,C_{00}^*)$. Thus each $C\in\mathrm{SU}(2)$ is uniquely represented by a normalized pair $(C_{00},C_{01})\in\mathbb{C}^2$, i.e., by a point on the unit sphere in four dimensions. The standard Hopf map projects the uniform distribution of $(C_{00}, C_{01})$ from this four-dimensional sphere to a uniform distribution of coordinates $(x,y,z)$ on a standard unit sphere in three dimensions, where $x = 2\Re(C_{01}C_{00}^*)$, $y = 2\Im(C_{01}C_{00}^*)$, and $z = |C_{00}|^2 - |C_{01}|^2$~\cite{bengtsson_2006_geometry}. Substituting $|C_{00}|^2 = (1+z)/2$ and $|C_{01}| = \sqrt{(1-z)/2}$ into the transport vector expression in \Eref{eq:transport_vector_integrated}, we obtain:
\begin{equation}
  \vmcT(SC) = \frac{\sqrt{2}}{2\qty(\sqrt{2}+\sqrt{1-z})} \mqty( x \\ -y \\ 1 + z ) = \frac{\sqrt{2}}{2\qty(\sqrt{2}+\sqrt{1-z})} (\vec{P} - \vec{S}),
\end{equation}
where $\vec{P} = (x, -y, z)^T$ and $\vec{S} = (0,0,-1)^T$ is the south pole.
If $C$ is randomly sampled from the Haar measure, the Hopf map makes $\vec{P}$
uniformly distributed over the unit sphere.
Since the scalar prefactor is strictly positive for all $z \in [-1,1]$, the sign of the projection onto $\vec{r}_0$ depends entirely on the vector difference: $\vmcT(SC) \cdot \vec{r}_0 < 0 \iff (\vec{P} - \vec{S}) \cdot \vec{r}_0 < 0$. Using the linearity of the dot product and evaluating $\vec{S} \cdot \vec{r}_0 = -\cos\theta$, where $\theta$ is the polar angle of $\vec{r}_0$, the condition simplifies to $\vec{P} \cdot \vec{r}_0 < -\cos\theta$. Geometrically, this restricts $\vec{P}$ to a spherical cap, the portion of the unit sphere cut off by the plane $\vec{P}\cdot\vec{r}_0=-\cos\theta$, of height $1-\cos\theta$. Because $\vec{P}$ is uniformly distributed over the sphere, the conditional probability for a fixed $\vec{r}_0$ is the ratio between the area of this cap and the total area of the unit sphere $p(\theta) = (1-\cos\theta)/2 \equiv u$. Integrating over all possible initial states $\vec{r}_0$ uniformly distributed on the Bloch sphere, whose polar angle has a probability density $f(\theta) = \frac{1}{2}\sin\theta$, we obtain the unconditional marginal probability, that is, the overall probability of an individual strategy yielding a losing outcome:
\begin{equation}
  \mathbb{P}(L) = \int_{0}^{\pi} p(\theta) f(\theta) \dd{\theta} = \int_{0}^{1} u \dd{u} = \frac{1}{2}.
\end{equation}

Next, we consider the probability that two individual strategies are losing, $\mathbb{P}\qty[\vmcT(SC_1) \cdot \vec{r}_0 < 0, \; \vmcT(SC_2) \cdot \vec{r}_0 < 0]$. The independence of the two Haar-distributed coin operators implies independence only after conditioning on a fixed initial state $\vec{r}_0$. For such a fixed $\vec{r}_0$, the two losing events are independent and have the same conditional probability $p(\theta)=u$. Hence,
$\mathbb{P}\qty[\vmcT(SC_1) \cdot \vec{r}_0 < 0, \;\vmcT(SC_2) \cdot \vec{r}_0 < 0 \mid \vec{r}_0] = p(\theta)^2 = u^2$.  The unconditional probability is then obtained by averaging this conditional joint probability over the uniformly distributed initial state. Therefore,
\begin{equation}\label{eq:p_LL}
  \mathbb{P}(L,L) = \int_{0}^{1} u^2 \dd{u} = \frac{1}{3}.
\end{equation}
This result differs from $\mathbb{P}(L)^2=1/4$ because the two losing events share the same random initial state $\vec{r}_0$; they are conditionally independent for fixed $\vec{r}_0$, but not marginally independent after the average over $\vec{r}_0$.

Finally, we use the pairwise probabilities derived above to determine
the probability of the $L\circ L=W$ contribution.
The three relevant outcomes are the signs of
$\vmcT(SC_1)\cdot\vec{r}_0$, $\vmcT(SC_2)\cdot\vec{r}_0$, and
$\vmcT(SC_2C_1)\cdot\vec{r}_0$. For fixed $\vec{r}_0$, any pair among
$C_1$, $C_2$, and $C_2C_1$ is distributed as a pair of independent Haar-distributed
$\mathrm{SU}(2)$ matrices.
The three sign events, however, are not mutually independent, because
$C_2C_1$ is determined by $C_1$ and $C_2$.

The remaining step is to reconstruct the full joint probability from these pairwise probabilities. Since $\vec{r}_0$ is sampled uniformly on the Bloch sphere, the change of variables $\vec{r}_0\mapsto-\vec{r}_0$ leaves the measure unchanged and interchanges $L\leftrightarrow W$ in all three outcomes. Hence, each configuration has the same probability as its complementary configuration:
\begin{equation}
\begin{aligned}
    \mathbb{P}(L\circ L = L)&=\mathbb{P}(W\circ W = W),\\
    \mathbb{P}(L\circ L = W)&=\mathbb{P}(W\circ W = L),\\
    \mathbb{P}(L\circ W = L)&=\mathbb{P}(W\circ L = W),\\
    \mathbb{P}(W\circ L = L)&=\mathbb{P}(L\circ W = W).
\end{aligned}
\label{eq:complementary_configurations}
\end{equation}

Then, we use the fact that $\mathbb{P}(L,L)$ in \Eref{eq:p_LL} applies to
any pair of strategies whose coin operators are independently sampled and
both yield losing outcomes.
Therefore, if both individual strategies, the first individual and combined strategy, or the second individual and combined strategy, are losing, we can write, respectively:
\begin{equation}
\begin{aligned}
    \mathbb{P}(L, L) = \mathbb{P}(L\circ L = L)+\mathbb{P}(L\circ L = W) = \frac{1}{3}, \\
    \mathbb{P}(L, L) = \mathbb{P}(L\circ L = L)+\mathbb{P}(L\circ W = L) = \frac{1}{3}, \\
    \mathbb{P}(L, L) = \mathbb{P}(L\circ L = L)+\mathbb{P}(W\circ L = L) = \frac{1}{3}.
\end{aligned}
\label{eq:pairwise_losing_marginals}
\end{equation}
In turn, these imply:
\begin{equation}
    \mathbb{P}(L\circ L = W)
    =
    \mathbb{P}(L\circ W = L)
    =
    \mathbb{P}(W\circ L = L).
\label{eq:two_losing_configurations_equal}
\end{equation}
By Eqs.~\eqref{eq:complementary_configurations} and
\eqref{eq:two_losing_configurations_equal}, the three configurations with
exactly one losing outcome have this same probability. Therefore,
normalization of the eight possible outcomes gives
\begin{equation}
    2\,\mathbb{P}(L\circ L = L)
    +
    6\,\mathbb{P}(L\circ L = W)
    =
    1.
\label{eq:joint_probability_normalization}
\end{equation}
Together with the first relation in Eq.~\eqref{eq:pairwise_losing_marginals},
\begin{equation}
    \mathbb{P}(L\circ L = L)
    +
    \mathbb{P}(L\circ L = W)
    =
    \frac{1}{3},
\label{eq:first_pairwise_losing_marginal}
\end{equation}
Eq.~\eqref{eq:joint_probability_normalization} yields
$\mathbb{P}(L\circ L = L)=1/4$. Hence,
the probability of the $L\circ L=W$ contribution is
\begin{equation}\label{eq:parrondo_joint_probability}
    \mathbb{P}(L\circ L = W) = \frac{1}{12}.
\end{equation}
By Eq.~\eqref{eq:complementary_configurations},
$\mathbb P(W\circ W=L)=\mathbb P(L\circ L=W)=1/12$. Therefore,
\begin{equation}
\mathbb P(L\circ L=W)+\mathbb P(W\circ W=L)=\frac{1}{6}.
\end{equation}

% *}}}
\section{Derivation of expressions for~\Fref{fig:PhaseDiagrams}} % {{{*
\label{SM:expr_fig3}
This section specializes the general transport vector and critical-angle results of Secs.~\ref{sm:transport_vector_integral} and~\ref{SM:critical_rot_angle} to the coin operator and initial state used in \Fref{fig:PhaseDiagrams}, giving the closed-form expressions plotted there and the areas of the
outcome-map regions used to quote the probabilities in the main text.
\subsection{Configuration} % {{{*
\Fref{fig:PhaseDiagrams} uses the Hadamard rotation axis, $\hat n(\vartheta=\pi/4,\varphi=0)=(1,0,1)/\sqrt2$, and an initial coin state on the equator with $\vec r_0(\theta=\pi/2,\phi=\pi/4)=(1,1,0)/\sqrt2$, i.e., $\ket{c_0}=(\ket0+e^{i\pi/4}\ket1)/\sqrt2$.
% *}}}
\subsection{Transport vector and asymptotic velocity} % {{{*
Substituting $\vartheta=\pi/4$ and $\varphi=0$ into the general transport vector, \Eref{eq:transp_vec}, gives
\begin{equation}\label{eq:transp_vec_Hadamard}
    \vmcT(\chi) = \frac{1}{2+\sqrt{2} \sin(\chi/2)}
    \begin{pmatrix}
        \sin^2(\chi/2) \\[0.25em]
        \sqrt{2} \sin(\chi/2) \cos(\chi/2) \\[0.25em]
        2 - \sin^2(\chi/2)
    \end{pmatrix}.
\end{equation}
Likewise, substituting $\vartheta=\pi/4$, $\varphi=0$, and $\phi=\pi/4$ into the general asymptotic velocity, \Eref{eq:asymp_vel_xyplane}, gives
\begin{equation}
    \overline{v}(\chi) = \frac{2\sin(\chi/2)\cos(\chi/2) + \sqrt{2}\sin^2(\chi/2)}{4 + 2\sqrt{2}\sin(\chi/2)},
\end{equation}
which, after the double-angle substitutions $2\sin(\chi/2)\cos(\chi/2)=\sin\chi$ and $2\sin^2(\chi/2)=1-\cos\chi$ followed by the harmonic-addition identity $a\sin\chi+b\cos\chi=\sqrt{a^2+b^2}\sin[\chi+\arctan(b/a)]$ with $a=1$, $b=-\sqrt2/2$, simplifies to
\begin{equation}\label{eq:asymp_vel_Hadamard}
    \overline{v}(\chi) = \frac{\sqrt{3}\sin\left[\chi - \arctan\left(\frac{1}{\sqrt{2}}\right)\right] + 1}{4\left[\sin(\chi/2) + \sqrt{2}\,\right]}.
\end{equation}
This represents the dashed black line in the main panel of \Fref{fig:PhaseDiagrams}.
% *}}}
\subsection{Critical rotation angle} % {{{*
Substituting the same $\vartheta=\pi/4$, $\varphi=0$, $\phi=\pi/4$ into the general critical angle, \Eref{eq:critical_angle_general}, gives the principal value $2\arctan(-\sqrt2)\in(-\pi,0)$. Shifting by $2\pi$ to the convention $\chi_\mathrm{c}\in(0,2\pi)$ used in the main text yields
\begin{equation}\label{eq:critical_angle_Hadamard}
    \chi_\mathrm{c} = 2\left[\pi - \arctan(\sqrt{2})\right].
\end{equation}
% *}}}
\subsection{Outcome-map areas}
Table~\ref{tab:areas_side_by_side} gives the areas of the six different regions of the $(\chi_1,\chi_2)$ outcome map introduced in the main text. Since coin operators with the same rotation axis commute, $L \circ W = L$ also includes $W \circ L = L$ (and likewise for $L \circ W = W$). The areas are given as a function of $\chi_\mathrm{c}$ distinguishing which side of $\pi$ it falls on. Also, the two tables correspond to the two possible ranges of
$(\phi-\varphi)\bmod2\pi$, which determine whether
$0<\chi<\chi_\mathrm{c}$ is winning and
$\chi_\mathrm{c}<\chi<2\pi$ is losing, or vice versa, for both the
individual angles $\chi_{1,2}$ and the combined angle
$(\chi_1+\chi_2)\bmod2\pi$. Summing any pair of paradoxical or intuitive entries reproduces the compact formulas quoted in the main text, e.g.\ $\mathbb P(L\circ L=W)+\mathbb P(W\circ W=L)\propto\chi_\mathrm{c}(2\pi-\chi_\mathrm{c})$.

For the configuration of \Fref{fig:PhaseDiagrams}, $(\phi-\varphi)\bmod2\pi=\pi/4\in(0,\pi)$ and, from \Eref{eq:critical_angle_Hadamard}, $\chi_\mathrm{c}=2[\pi-\arctan(\sqrt2)]\in(\pi,2\pi)$, so the areas shown in its inset are those of the $\chi_\mathrm{c}\in(\pi,2\pi)$ column of \Tref{tab:areas_side_by_side}(b).

\begin{table}[htpb]
\centering
\renewcommand{\arraystretch}{1.5}

\caption{Outcome map areas for the two branches of $(\phi - \varphi) \!\! \mod{2\pi}$.}
\label{tab:areas_side_by_side}

\begin{minipage}[t]{0.49\textwidth}
\centering
\vspace{0pt}
\textbf{(a)} $(\phi - \varphi) \!\! \mod{2\pi} \in (0, \pi)$

\vspace{0.5em}

\resizebox{\linewidth}{!}{%
\begin{tabular}{lcc}
    \hline
    Region & $\chi_\mathrm{c} \in (0, \pi]$ & $\chi_\mathrm{c} \in (\pi, 2\pi)$ \\
    \hline
    $L \circ L = L$ & $\frac{1}{2}(2\pi - \chi_\mathrm{c})^2 + 2\qty(\chi_\mathrm{c} - \pi)^2$ & $\frac{1}{2}(2\pi - \chi_\mathrm{c})^2$ \\
    $L \circ L = W$ & $\frac{1}{2}(2\pi - \chi_\mathrm{c})^2 - 2\qty(\chi_\mathrm{c} - \pi)^2$ & $\frac{1}{2}(2\pi - \chi_\mathrm{c})^2$ \\
    $W \circ W = W$ & $\frac{1}{2}\chi_\mathrm{c}^2$ & $\frac{1}{2}\chi_\mathrm{c}^2 + 2\qty(\chi_\mathrm{c} - \pi)^2$ \\
    $W \circ W = L$ & $\frac{1}{2}\chi_\mathrm{c}^2$ & $\frac{1}{2}\chi_\mathrm{c}^2 - 2\qty(\chi_\mathrm{c} - \pi)^2$ \\
    $L \circ W = L$ & $2\qty[\frac{1}{2}(2\pi - \chi_\mathrm{c})^2 - 2\qty(\chi_\mathrm{c} - \pi)^2]$ & $(2\pi - \chi_\mathrm{c})^2$ \\
    $L \circ W = W$ & $\chi_\mathrm{c}^2$ & $2\qty[\frac{1}{2}\chi_\mathrm{c}^2 - 2\qty(\chi_\mathrm{c} - \pi)^2]$ \\
    \hline
\end{tabular}%
}
\end{minipage}
\hfill
\begin{minipage}[t]{0.49\textwidth}
\centering
\vspace{0pt}
\textbf{(b)} $(\phi - \varphi) \!\! \mod{2\pi} \in (\pi, 2\pi)$

\vspace{0.5em}

\resizebox{\linewidth}{!}{%
\begin{tabular}{lcc}
  \hline
  Region & $\chi_\mathrm{c} \in (0, \pi]$ & $\chi_\mathrm{c} \in (\pi, 2\pi)$ \\
  \hline
  $L \circ L = L$ & $\frac{1}{2}\chi_\mathrm{c}^2$ & $\frac{1}{2}\chi_\mathrm{c}^2 + 2\qty(\chi_\mathrm{c} - \pi)^2$ \\
  $L \circ L = W$ & $\frac{1}{2}\chi_\mathrm{c}^2$ & $\frac{1}{2}\chi_\mathrm{c}^2 - 2\qty(\chi_\mathrm{c} - \pi)^2$ \\
  $W \circ W = W$ & $\frac{1}{2}(2\pi - \chi_\mathrm{c})^2 + 2\qty(\chi_\mathrm{c} - \pi)^2$ & $\frac{1}{2}(2\pi - \chi_\mathrm{c})^2$ \\
  $W \circ W = L$ & $\frac{1}{2}(2\pi - \chi_\mathrm{c})^2 - 2\qty(\chi_\mathrm{c} - \pi)^2$ & $\frac{1}{2}(2\pi - \chi_\mathrm{c})^2$ \\
  $L \circ W = L$ & $\chi_\mathrm{c}^2$ & $2\qty[\frac{1}{2}\chi_\mathrm{c}^2 - 2\qty(\chi_\mathrm{c} - \pi)^2]$ \\
  $L \circ W = W$ & $2\qty[\frac{1}{2}(2\pi - \chi_\mathrm{c})^2 - 2\qty(\chi_\mathrm{c} - \pi)^2]$ & $(2\pi - \chi_\mathrm{c})^2$ \\
  \hline
\end{tabular}%
}
\end{minipage}
\end{table}

\end{document}